\newtheorem{theorem}{Theorem}
\newtheorem{corollary}[theorem]{Corollary}
\newtheorem{lemma}[theorem]{Lemma}
\newtheorem{fact}[theorem]{Fact}
\newtheorem{definition}[theorem]{Definition}
\newtheorem{proposition}[theorem]{Proposition}
\newcommand\ignore[1]{}
\newenvironment{proof}{\noindent{\bf Proof:  }}{\hfill\rule{2mm}{2mm}\medskip}
\def\sse{\subseteq}
\def\ts{\textstyle}
\def\Vbar{\overline{V}}
\def\dbar{\overline{d}}
\def\Mbar{\overline{M}}
\def\R{\mathbb{R}}
\def\T{\mathbb{T}}
\def\Z{\mathbb{Z}}
\def\e{\varepsilon}
\def\eps{\e}
\def\conv{{\sf conv}}
\def\leps{\tau}
\newcounter{note}[section]
 \newcommand{\agnote}[1]{}
 \newcommand{\ktnote}[1]{}
\newcommand{\initOneLiners}{%
    \setlength{\itemsep}{0pt}
    \setlength{\parsep }{0pt}
    \setlength{\topsep }{0pt}
}
\newenvironment{OneLiners}[1][\ensuremath{\bullet}]
    {\begin{list}
        {#1}
        {\initOneLiners}}
    {\end{list}}
\newcommand{\comment}[1]{}
\begin{document}

\title{How to Complete a Doubling Metric}

\author{Anupam Gupta\thanks{
   Computer Science Department, Carnegie Mellon University, Pittsburgh,
   PA 15213.  This research
   was partly supported by the NSF CAREER award CCF-0448095, and by an Alfred P.
   Sloan Fellowship.}
 \and Kunal Talwar\thanks{Microsoft Research,
Silicon Valley Campus, 1065 La Avenida, Mountain View, CA 94043.} }

\begin{titlepage}
\def\thepage{}
\thispagestyle{empty}

\date{}
\maketitle
\begin{abstract}

  \bigskip In recent years, considerable advances have been made in the
  study of properties of metric spaces in terms of their doubling
  dimension.  This line of research has not only enhanced our
  understanding of finite metrics, but has also resulted in many
  algorithmic applications. However, we still do not understand the
  interaction between various graph-theoretic (topological) properties
  of graphs, and the doubling (geometric) properties of the shortest-path
  metrics induced by them. For instance, the following natural question
  suggests itself: \emph{given a finite doubling metric $(V,d)$, is
    there always an \underline{unweighted} graph $(V',E')$ with
    $V\subseteq V'$ such that the shortest path metric $d'$ on $V'$ is
    still doubling, and which agrees with $d$ on $V$.} This is often
  useful, given that unweighted graphs are often easier to reason about.

  \medskip\noindent A first hurdle to answering this question is that
  subdividing edges can increase the doubling dimension unboundedly, and
  it is not difficult to show that the answer to the above question is
  negative. However, surprisingly, allowing a $(1+\eps)$ distortion
  between $d$ and $d'$ enables us bypass this impossibility: we show
  that for any metric space $(V,d)$, there is an \emph{unweighted} graph
  $(V',E')$ with shortest-path metric $d':V'\times V' \to \R_{\geq 0}$
  such that
  \begin{itemize}
  \item for all $x,y \in V$, the distances $d(x,y) \leq d'(x,y) \leq
    (1+\eps) \cdot d(x,y)$, and
  \item the doubling dimension for $d'$ is not much more than that of
    $d$, where this change depends only on $\e$ and not on the size of
    the graph.
  \end{itemize}

  \medskip\noindent We show a similar result when both $(V,d)$ and
  $(V',E')$ are restricted to be trees: this gives a simple proof that
  doubling trees embed into constant dimensional Euclidean space with
  constant distortion. We also show that our results are tight in terms
  of the tradeoff between distortion and dimension blowup.

\end{abstract}

\end{titlepage}
\newpage

\section{Introduction}

The algorithmic study of finite metrics has become a central theme
in theoretical computer science in recent years. Of particular
interest has been the study of the geometry of metrics---embeddings
into Minkowski spaces have been the most obvious example,
accompanied by the study of notions of metric dimension which have
allowed us to partially quantify geometric properties that make
metrics tractable for several algorithmic problems.

Given these advances in our understanding of the geometric
properties of abstract metric spaces, it is worth remarking that our
comprehension of the \emph{topological} properties of metric
spaces---and of the relationship between topology and geometry has
lagged behind: we do not yet have a good comprehension of how the
structure of a graph interacts with the dimensionality of the
shortest-path metric induced by it. One such example shows up in a
paper~\cite{GKL03}, where a fairly simple algorithm is given for
low-distortion Euclidean embeddings of unweighted trees whose
shortest-path metric is doubling---however, extending the result to
embed \emph{weighted} trees (also with doubling shortest-path
metrics) requires significantly more work. This raises the natural
question: \emph{given a doubling tree metric $M = (V,d)$, is there
an
  unweighted tree $G = (V', E')$ whose shortest-path metric is also
  doubling, and contains $M$ as a submetric?} In fact, the situation is
even more embarrassing: we do not know the answer even if we drop the
requirement that $G$ be a tree, and look for any unweighted graph!

An immediate obstacle to answering these question is the observation
that subdividing the edges of a weighted tree to convert it into an
unweighted tree can increase the dimension unboundedly. For example,
take a star $K_{1,n}$, and set the length of the $i^{th}$ edge
$\{v_0, v_i\}$ to be $2^i$. It is easy to check that the metric
$d_G$ has constant doubling dimension; however, subdividing the
$i^{th}$ edge into $2^i$ parts to make it unit-weighted creates a
new graph with $n$ points at unit distance from each other, which
has a doubling dimension $\log n$ that is unbounded. On the positive
side, it is easy to show that this metric can be embedded into the
real line with distortion $2$ (e.g., the map $v_i \mapsto 2^i$),
which we can subdivide without altering the doubling dimension. In
this paper, we show that this positive result is not an aberration:
any tree metric can be represented as a submetric of an unweighted
tree metric which has almost the same doubling dimension. We show a
similar result for arbitrary graphs as well, and show that our
tradeoff between distortion and the dimension blowup is
asymptotically optimal.

\medskip{\bf Formal Definitions:} To define the problems we study, let
us define the \emph{convex closure} of a graph, which is an extension of
the notion of subdividing edges. Given a graph $G = (V,E)$ with edge
lengths $\ell : E \to \R_{\geq 0}$, assume that the names of the
vertices in $V$ belong to some total order $(V, \prec)$. Let $\Vbar_G$
be the uncountably infinite set of points $V \cup \{ e[x] \mid e \in E,
x \in (0, \ell(e))\}$ obtained by considering each edge as a continuous
segment of length $\ell(e)$. Let $M_G = (V, d_G)$ be the shortest-path
metric of the graph $G$: we can define a natural metric on the set
$\Vbar_G$ as
\begin{align*}
  \dbar_G( e[x], e'[y] ) = \min\{ & x + d(u,u') + y, ~~ x + d(u,v') +
  (\ell(e') - y), \\ &(\ell(e) - x) + d(u',v) + y, ~~ (\ell(e) - x) + d(u',v')
  + (\ell(e') - y) \},
\end{align*}
if $e = \{u,v\}$ (with $u \prec v$) and $e' = \{u',v'\}$ (with $u' \prec
v'$). We now define the \emph{convex closure} of the graph $G$ to be the
metric space $\conv(G) \doteq \Mbar_G = (\Vbar_G, \dbar_G)$. Note the
metric obtained by subdividing edges of $G$ is a sub-metric of the
convex closure of $G$, and hence it suffices to study the doubling
dimension of this convex closure $\conv(G)$.

\subsection{Our Results}

The example of $K_{1,n}$ with exponential edge weights shows that even
if the shortest-path metric $M_G$ of a graph $G$ is doubling, its convex
closure $\Mbar_G$ may not be doubling. The goal of this paper is to show
that despite this, there is a ``close-by'' graph $G'$ whose convex
closure $\Mbar_{G'}$ is indeed doubling. In particular, the main theorem
is the following:
\begin{theorem}[Main Theorem]
  \label{thm:main1}
  Given a graph $G = (V,E)$ with specified edge-lengths, we can
  efficiently find a graph $G' = (V, E')$ (also with non-negative
  edge-lengths) such that
  \begin{OneLiners}
  \item The distances in $G$ and $G'$ are within a multiplicative factor
    of $(1 + \e)$ of each other, and
  \item If $\dim(M_{G}) = k$, then $\dim(M_{G'}) = O(k)$, and
    $\dim(\conv(G')) = O(k \log \e^{-1})$.
  \end{OneLiners}
\end{theorem}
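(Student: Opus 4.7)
The plan is to build $G'$ as a $(1+\varepsilon)$-spanner of $M_G$ supported on a hierarchical net decomposition, in such a way that every edge of $G'$ is ``length-local'' -- its length is comparable to the scale at which it was introduced. This directly addresses the $K_{1,n}$ pathology from the introduction, where subdivision blows up the dimension precisely because one vertex is incident to edges of wildly varying lengths; the remedy is to route each long edge through intermediate vertices at intermediate scales, mimicking the $v_i \mapsto 2^i$ embedding of the star into the line.

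\textbf{Construction.} I would build a hierarchy $V = N_0 \supseteq N_1 \supseteq \cdots \supseteq N_L$ where $N_i$ is a $2^i$-net of $M_G$; by the doubling hypothesis, $|N_i \cap B(v,r)| \leq (r/2^i)^{O(k)}$. The edge set $E'$ consists of: (i) \emph{lateral} edges between every pair $u,v \in N_i$ with $d(u,v) \leq c \cdot 2^i/\varepsilon$, for a suitable constant $c$; and (ii) \emph{parent} edges from each $v \in V$ to its nearest representative in $N_i$, for every level $i$. Each edge carries its $M_G$-length, so $d_{G'} \geq d$ holds trivially.

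\textbf{Distortion and $\dim(M_{G'})$.} For $u,v$ with $d(u,v) \in [2^i, 2^{i+1})$, I would route through representatives $u',v'$ of $u,v$ in $N_{i - O(\log 1/\varepsilon)}$: the lateral edge $(u',v')$ exists by the choice of threshold, and the parent legs sum geometrically to at most $O(\varepsilon) \cdot d(u,v)$. Hence $d_{G'}(u,v) \leq (1+\varepsilon) d(u,v)$. Combined with $d_{G'} \geq d$, the two metrics agree up to a factor of $(1+\varepsilon)$, so their doubling dimensions agree up to constants and $\dim(M_{G'}) = O(k)$.

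\textbf{Bounding $\dim(\conv(G'))$ (the key obstacle).} This is the crux of the argument. Given a ball $B(p,r)$ in the convex closure, each edge meeting it does so in a segment of length at most $2r$, coverable by $O(1)$ half-radius balls, so it suffices to bound the number of edges meeting $B(p,r)$. The key claim is that only the $O(\log \varepsilon^{-1})$ levels $i$ whose edge-length scale $2^i/\varepsilon$ lies in a constant-factor window around $r$ contribute meaningfully: at levels with much shorter edges, every contributing edge has both endpoints in $B(p,O(r))$ and the count reduces to an $M_{G'}$-doubling bound; at levels with much longer edges, the net spacing $2^i$ already exceeds $r$, limiting by doubling to $2^{O(k)}$ relevant net points near $p$. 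Within each relevant level, doubling gives $2^{O(k)}$ net points near $p$ and each is incident to $2^{O(k)}$ lateral edges, for a total of $2^{O(k) \cdot O(\log \varepsilon^{-1})} = 2^{O(k \log \varepsilon^{-1})}$ edges in $B(p,r)$. The main technical subtlety will be dealing with long lateral edges that pierce $B(p,r)$ without any endpoint inside it, and verifying that the parent edges -- which span multiple scales simultaneously -- do not inflate any one level's count; both should succumb to careful per-level packing arguments using the doubling of the $N_i$'s.
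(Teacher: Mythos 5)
Your plan --- a net-based spanner with ``length-local'' edges, then counting edges per length scale --- matches the paper's high-level intuition, but the construction you propose does not control the number of long edges incident to any single \emph{vertex}, and this is fatal. Run it on the $K_{1,n}$ example from the introduction. The center $v_0$ is the nearest neighbor of every leaf, so $v_0$ lies in $N_i$ for \emph{every} level $i$; at each level $i$ it then acquires a lateral edge to $v_i$, since $d(v_0,v_i)=2^i \le c\cdot 2^i/\eps$. Thus $v_0$ ends up incident to $n$ lateral edges of lengths $2,4,\ldots,2^n$, and a ball $B(v_0,r)$ in $\conv(G')$ contains $\Omega(\log(2^n/r))$ points at pairwise distance $\ge r/2$, one on each long edge emanating from $v_0$ --- exactly the pathology the construction was supposed to remove. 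Your ``key claim'' correctly bounds the net points of $N_i$ near $p$ at each high level by $2^{O(k)}$, but it overlooks that the \emph{same} net point (here $v_0$) recurs at every high level, contributing fresh long edges each time. Length-local edges are not enough: you also need each vertex to touch edges from only $O(\log\eps^{-1})$ length scales, and nothing in the bare net spanner enforces this.

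That property is precisely what the bounded-degree spanner of Chan et al.~\cite{CGMZ05} buys, and it is the ingredient the paper's proof actually relies on. Their construction starts from net-based edge sets $E_i$ much like yours, directs the edges so that out-degree is $\eps^{-O(k)}$, and then controls in-degree by a nontrivial ``donation'' step: when a vertex $x$ accumulates in-edges at too many levels, the in-edges $\{y,x\}$ from high levels are re-routed to $\{y,u\}$ for a nearby vertex $u$ chosen from a level $\approx 7\log\eps^{-1}$ lower, incurring only $(1\pm\eps^6)$ length distortion per edge. This caps the total degree at $\eps^{-O(k)}$; bounding long edges then still requires tracking where the donated edges land, which is the content of Lemma~\ref{lem:bd-spanner-longs}. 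The paper also isolates the conversion from long-edge counts to $\dim(\conv(\cdot))$ as a Structure Theorem (Theorem~\ref{thm:struct}), which fixes a second looseness in your reduction: short edges near a vertex need no covering balls beyond those already covering the vertex set, so one should count only \emph{long} edges (length exceeding $r$ with an endpoint in $B(u,r)$) rather than all edges meeting the ball, and it is $\max_{u,r}|L_u(r)|$ that controls $\dim(\conv(H))$.
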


Since Theorem~\ref{thm:main1} does not give any guarantees about the
topology of the graph $G'$, we prove an analogous result about tree
metrics, with improved guarantees on the dimension:
\begin{theorem}
  \label{thm:main2}
  Given a tree $T = (V,E)$ with specified edge-lengths, we can
  efficiently find a tree $T' = (V', E')$ with $V \sse V'$ (and with
  non-negative edge-lengths) such that
  \begin{OneLiners}
  \item For $x,y \in V$, the distance between them in $T$ and $T'$ are
    within a multiplicative factor
    of $(1+\e)$ of each other, and
  \item If $\dim(M_{T}) = k$, then $\dim(M_{T'}) = O(k)$, and
    $\dim(\conv(T')) = O(k + \log\log \e^{-1})$.
  \end{OneLiners}
\end{theorem}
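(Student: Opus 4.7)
The plan is to construct $T'$ by a local ``spine'' modification at each vertex $v \in V$, then bound $\dim(M_{T'})$ and $\dim(\conv(T'))$ separately. At each $v$, I would sort the edges of $T$ incident to $v$ in non-decreasing order of length, $\ell_1 \leq \ell_2 \leq \cdots \leq \ell_m$, and replace the star at $v$ with a path of new Steiner vertices $v = w_0^v, w_1^v, \ldots, w_m^v$, placing $w_i^v$ at distance $a_i^v = \Theta(\eps)\,\ell_i$ from $v$ along the spine. The $T$-edge $(v, u_i)$ is replaced in $T'$ by an edge from $w_i^v$ to the analogous spine vertex $w_j^{u_i}$ on $u_i$'s spine, with length chosen so that $d'(v, u_i) = \ell_i$, and I would then rescale all edge lengths by a global factor of $1 + O(\eps)$ to enforce the non-contraction property $d'(x,y) \geq d(x,y)$.

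For the $(1+O(\eps))$-distortion bound, consider an $x$-to-$y$ path in $T$ with vertices $x = u_0, \ldots, u_L = y$ and edge lengths $\ell_1, \ldots, \ell_L$. The corresponding path in $T'$ picks up, at each intermediate $u_i$, a spine detour of length $O(\eps)\,|\ell_i - \ell_{i+1}|$ between the two attachment positions; these detours telescope to $O(\eps)\sum_i \ell_i = O(\eps)\,d(x,y)$, and with the global rescaling this yields $d'(x,y) \in [d(x,y), (1+O(\eps))d(x,y)]$. The bound $\dim(M_{T'}) = O(k)$ follows because a ball of radius $r$ in $M_{T'}$ contains at most $2^{O(k)}$ original vertices by doubling of $M_T$, and the spine Steiner vertices near each such vertex lie on a path of total length at most $r$ inside the ball, which is covered by $O(1)$ extra balls of radius $r/2$.

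For the main claim $\dim(\conv(T')) = O(k + \log\log \eps^{-1})$, I would fix a ball $B(p, r)$ in $\conv(T')$ and count the radius-$r/2$ balls needed to cover it. By doubling of $M_T$, only $2^{O(k)}$ original vertices $u$ lie in $B(p, 2r)$. For each, I would classify the $T$-edges incident to $u$ by length: \emph{short} edges ($\ell \leq r$) number $2^{O(k)}$ per vertex by doubling and contribute $O(1)$ covering balls each; \emph{long} edges ($r < \ell \leq r/\eps$) have their $T'$-attachment at position $\Theta(\eps\ell) \leq r$ on $u$'s spine but with the far end outside $B(p,r)$, and summing doubling over $O(\log \eps^{-1})$ dyadic length ranges yields $2^{O(k)}\log\eps^{-1}$ such edges per vertex, each needing $O(1)$ balls for its truncated attachment segment of length at most $r$; and edges of length $> r/\eps$ have attachment point already outside $B(p,r)$ and contribute nothing. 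Summing over the $2^{O(k)}$ nearby vertices, $2^{O(k)} \cdot O(\log \eps^{-1})$ balls suffice, yielding the claimed dimension. The main obstacle I expect is the telescoping step in the distortion argument, which works cleanly only because $a_i^v \propto \eps\ell_i$, so each per-vertex detour is $O(\eps)|\ell_i - \ell_{i+1}|$ and telescopes against $\sum_i \ell_i$; a coarser choice of offsets would fail to telescope and could accumulate detour independent of $d(x,y)$.
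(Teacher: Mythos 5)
Your spine construction is recognizably the same idea as the paper's: both move the attachment point of an edge of length $\ell$ out to distance $\Theta(\eps\ell)$ from the vertex it was incident to, so that long edges near a vertex get spread out along a one-dimensional structure instead of clustering at a point. The paper implements this by attaching an \emph{exponential tail} ($j$-th tail edge of length $2^j$) to every vertex and routing an edge of length $\approx C_\eps 2^i$ through the $i$-th tail vertex of a \emph{level-$i$ net representative}; you instead attach a bespoke spine to each vertex with stops at $\Theta(\eps)\ell_i$. The net-tree detour in the paper is not cosmetic: it merges the tails of nearby vertices, which is what keeps the counting clean. Your direct construction is plausible, but as written it has two real gaps.

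First, the claim ``by doubling of $M_T$, only $2^{O(k)}$ original vertices $u$ lie in $B(p,2r)$'' is false: doubling bounds the size of an $r$-packing inside a ball, not the raw number of points. A path with $n$ unit-length edges is doubling, and a ball of radius $2$ contains all $n$ vertices. The thing you actually need to count is not vertices but long-edge attachment points in $B(p,r)$, and the bound must come from showing those attachment points are well separated. This is exactly what the paper's \emph{Structure Theorem} (Theorem~\ref{thm:struct}) packages: $\dim(\conv(T'))$ is controlled by $\max_{u,r}\log|L_u(r)|$, so you should be bounding the number of long edges, not attempting a direct covering argument. The paper's Lemma~\ref{lem:long-edges-tree} then uses the tree structure crucially: two long edges incident near $v$ have far endpoints $w,w'$ whose tree path traverses both long edges, giving a pairwise-separation lower bound and hence a $2^{O(k)}$ packing bound per dyadic scale.

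Second, the claim that ``edges of length $>r/\eps$ have attachment point already outside $B(p,r)$'' fails precisely in the case the Structure Theorem forces you to handle, namely when $p$ is itself a spine/tail vertex far from the original vertex. If $p = w_j^v$ with $\ell_j \gg r/\eps$, then a neighboring spine vertex $w_i^v$ with $|\ell_i-\ell_j| \lesssim r/\eps$ lies in $B(p,r)$ even though $\ell_i$ can be arbitrarily large. The count is still fine (all such $\ell_i$ live at a single length scale $\approx \ell_j$, so doubling gives $2^{O(k)}$ of them), but your stated argument does not cover it; the paper handles the tail-vertex case explicitly as a separate case. Finally, a smaller bookkeeping slip: the spine detour at $u_i$ does add $\Theta(\eps)|\ell_i - \ell_{i+1}|$, but each cross edge is \emph{shortened} by $\Theta(\eps)(\ell_i + \ell_{i+1})$, so the net per-vertex effect is a contraction of $\Theta(\eps)\min(\ell_i,\ell_{i+1})$, not an expansion; the resulting $(1+O(\eps))$ distortion after rescaling is still correct, but the ``telescoping'' reasoning as written does not reflect what actually happens to the path length.
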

As a corollary of this result, we obtain an independent proof of the
following result about embeddings of doubling tree metrics into $\ell_p$
spaces:
\begin{corollary}[\cite{GKL03}]
  \label{cor:gkl-thm}
  Every (weighted) doubling tree metric embeds into $\ell_p$ with
  constant distortion and constant dimension.
\end{corollary}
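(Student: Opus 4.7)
The plan is to combine Theorem~\ref{thm:main2} with the unweighted-tree portion of~\cite{GKL03} (the ``easy'' direction mentioned in the introduction), and thereby reduce the weighted case to the unweighted case.

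Start with a weighted doubling tree $T = (V,d)$ of doubling dimension $k$, and fix a constant $\e$, say $\e = 1$. First, apply Theorem~\ref{thm:main2} to produce a tree $T' = (V',E')$ with $V \sse V'$ whose metric approximates distances on $V$ to within a $(1+\e)$ factor, and for which $\dim(\conv(T')) = O(k + \log\log \e^{-1}) = O(k)$.

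Next, I would convert $T'$ into an \emph{unweighted} tree $T''$ by edge subdivision. Scale all edge lengths by a large integer $N$, round each to the nearest integer (introducing at most a multiplicative $(1 + O(1/N))$ distortion), and then subdivide each rounded edge into unit-length pieces. The resulting shortest-path metric $M_{T''}$, after rescaling by $1/N$, is isometric to an $O(1/N)$-net inside $\conv(T')$. Since sub-metrics (in particular, nets) of doubling metrics remain doubling up to constant factors, we conclude $\dim(M_{T''}) = O(k)$.

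Finally, apply the simple embedding of~\cite{GKL03} for unweighted doubling trees to $T''$, producing an embedding into $\ell_p$ with $O(1)$ distortion and $O(1)$-dimensional image. Composing with the $(1+\e)$-distortion inclusion $V \hookrightarrow V'$ yields the desired constant-distortion, constant-dimension embedding of $T$. The main subtlety to verify is the dimension bound on $M_{T''}$; here the crucial point is that Theorem~\ref{thm:main2} controls $\dim(\conv(T'))$ and not merely $\dim(M_{T'})$, which is exactly what lets us subdivide without any blowup in doubling dimension.
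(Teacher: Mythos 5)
Your proof is correct and is precisely the reduction the paper intends (the paper states the corollary without spelling out the argument, but the introduction makes clear that Theorem~\ref{thm:main2} is meant to reduce the weighted case to the unweighted-tree embedding of~\cite{GKL03}). One small imprecision worth noting: after you round the edge lengths of $T'$ to multiples of $1/N$, the subdivision points form a net of $\conv(T'_{\mathrm{round}})$, not of $\conv(T')$ itself; you then need the (routine) observation that $\conv(T'_{\mathrm{round}})$ and $\conv(T')$ are $(1+O(1/N))$-bilipschitz, so their doubling dimensions differ by at most a constant factor, after which Fact~\ref{fct:submetric} gives $\dim(M_{T''}) = O(k)$ as you claim.
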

(Another proof of this embedding result for doubling trees appears
in~\cite{LNP06}, using completely different techniques.)

In addition, we show that the tradeoff between the distortion and
the dimension of the convex closure shown in Theorem~\ref{thm:main2}
is asymptotically optimal:

\begin{theorem}
\label{thm:treelowerbound} There exists a tree metric $T=(V,E)$ with
$\dim(M_T)=O(1)$ such that for any tree metric $T'=(V',E')$ with $V
\subseteq V'$, the following holds. If $d_T(u,v) \leq d_{T'}(u,v)
\leq (1+\e)d_T(u,v)$ for all $u,v \in V$, then $\dim(\conv(T'))$
must be $\Omega(\log\log\e^{-1})$.
\end{theorem}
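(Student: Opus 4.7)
I would take $T$ to be the exponentially weighted star that already appears as the paradigmatic bad example in the introduction: let $T = K_{1,n}$ with center $v_0$, leaves $v_1,\ldots,v_n$, and edge lengths $\ell(v_0 v_i) = 2^i$, where $n = \lceil \log_2 \e^{-1}\rceil$. A direct counting argument shows $\dim(M_T) = O(1)$: a ball of radius $R$ around $v_0$ contains only those $v_i$ with $2^i \le R$, and the $v_i$'s with $2^i > R/2$ are isolated singletons in any covering by $(R/2)$-balls. Now fix any tree $T' = (V',E')$ with $V \sse V'$ satisfying the $(1+\e)$-distortion condition. The plan is to exhibit, inside a single ball of $\conv(T')$, roughly $\log \e^{-1}$ points that are pairwise $r$-separated, so that any cover by half-radius balls needs that many pieces, forcing $\dim(\conv(T')) = \Omega(\log\log \e^{-1})$.

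The key structural observation concerns the branching point $q_{ij}$ where the $v_0$-$v_i$ and $v_0$-$v_j$ geodesics in $T'$ diverge. By the four-point (Gromov product) identity for a tree,
\[
a_{ij} \;:=\; \dbar_{T'}(v_0, q_{ij}) \;=\; \tfrac{1}{2}\bigl(d_{T'}(v_0,v_i) + d_{T'}(v_0,v_j) - d_{T'}(v_i,v_j)\bigr).
\]
In $M_T$ this quantity is exactly $0$, since the star is maximally degenerate (all three pairings in the four-point condition on $\{v_0,v_i,v_j\}$ coincide). The $(1+\e)$-distortion hypothesis therefore pinches it to $a_{ij} \le \tfrac{\e}{2}(2^i + 2^j) \le \e\cdot 2^{\max(i,j)}$.

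Next I choose the scale $r := 4\e\cdot 2^n$. For every index $k$ with $2^k > r$ (equivalently, $k$ in the top $\log_2 \e^{-1} - 2$ values of $\{1,\ldots,n\}$), the $v_0$-$v_k$ geodesic in $T'$ is long enough to exit the ball $B(v_0,r) \sse \conv(T')$, and it does so at a canonical point $p_k$ at distance exactly $r$ from $v_0$. For any two such indices $i,j$, my upper bound gives $a_{ij} \le \e\cdot 2^n = r/4$; in particular the branching point lies well inside the ball, and the pairwise distance in the tree is
\[
\dbar_{T'}(p_i,p_j) \;=\; 2(r - a_{ij}) \;\ge\; \tfrac{3r}{2} \;>\; r.
\]
This produces $\Omega(\log \e^{-1})$ points inside $B(v_0,r)$ that are pairwise at distance exceeding $r$. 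Any cover of $B(v_0,r)$ by balls of radius $r/2$ must use at least one ball per such point, and the standard packing-to-doubling inequality then yields $\dim(\conv(T')) \ge \log_2 \Omega(\log \e^{-1}) = \Omega(\log\log \e^{-1})$.

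The main delicate step is calibrating the scale $r$: it must be large enough that every pairwise branching depth $a_{ij}$ sits comfortably below $r/2$ (so the exit points $p_k$ remain well separated), yet small enough that a nontrivial number of leaves $v_k$ still have geodesics that escape the ball. The exponential edge weights give a window of exactly $\Theta(\log \e^{-1})$ admissible indices between these two constraints, and the doubly-logarithmic lower bound is precisely the $\log_2$ of that window size. I do not foresee any topological subtlety beyond this calibration: the tree structure of $T'$ makes the branching points, exit points, and their distances all transparent once the Gromov product bound is in hand.
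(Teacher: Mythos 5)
Your proof is correct and follows the same strategy as the paper: fix the exponential star $K_{1,n}$, take a point at a fixed small radius from $v_0$ on each $v_0$-$v_i$ geodesic in $\conv(T')$, and use the $(1+\e)$-distortion hypothesis to show that $\Theta(\log\e^{-1})$ of these points are pairwise well-separated, which forces $\dim(\conv(T')) = \Omega(\log\log\e^{-1})$. The only substantive difference is that your separation bound $\dbar_{T'}(p_i,p_j) = 2(r-a_{ij})$ leans on the tree (Gromov-product/branching) structure of $T'$, whereas the paper derives the identical estimate (with scale $r=1$) from the triangle inequality alone, so its lemma applies to \emph{any} geodesic metric containing a $(1+\e)$-copy of the star, not just to trees --- your argument is a slightly less general calibration of the same computation.
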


For general graphs, we show that our tradeoff is asymptotically
optimal, under the restriction that the graph $G'$ is defined on the
same vertex set as $G$, i.e. we do not use any steiner points.
\begin{theorem}
\label{thm:graphlowerbound} There exists a metric $G=(V,E)$ with
$\dim(M_G)=O(1)$ such that for any graph $G'=(V,E')$, the following
holds. If $d_G(u,v) \leq d_{G'}(u,v) \leq (1+\e)d_G(u,v)$ for all
$u,v \in V$, then $\dim(\conv(G'))$ must be $\Omega(\log\e^{-1})$.
\end{theorem}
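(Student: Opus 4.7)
The first observation is that the witnessing metric $G$ cannot itself be a tree metric: if $G$ were a tree, then Theorem~\ref{thm:main2} applied to $G$ would yield a tree (and hence graph) $G'$ with $\dim(\conv(G')) = O(\log\log\e^{-1})$, which is asymptotically smaller than the claimed lower bound. So $G$ must have genuine non-tree structure (cycles) that prevent the $O(\log\log\e^{-1})$ tree construction from being imitated by an arbitrary graph $G'$. I would therefore take $G$ to be a weighted graph built on top of an exponentially-weighted star $K_{1,n}$ with $d_G(v_0,v_i) = 2^i$, but augmented with carefully tightened ``cross-distance'' constraints $d_G(v_i,v_j)$ (strictly smaller than the star distance $2^i+2^j$) that force $G'$ to realize many metric constraints via direct edges; the hierarchical scaling keeps $\dim(M_G) = O(1)$.

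Given any valid $G'$, the argument proceeds in three steps. First, a \emph{first-hop analysis} at the hub $v_0$: by the triangle inequality and the $(1+\e)$-distortion, if the shortest $v_0$-to-$v_i$ path in $G'$ has first edge $(v_0,v_j)$ with $j\ne i$, then $d_G(v_0,v_j) \leq \e\cdot d_G(v_0,v_i)$; i.e.\ indirect first hops only go to vertices at much smaller scale. As in the tree lower bound (Theorem~\ref{thm:treelowerbound}), this alone forces $v_0$ to have direct edges in $E'$ to roughly $\log\e^{-1}$ vertices, but only at geometrically increasing scales. Second, the added cross-distance constraints in $G$ are used to \emph{amplify} this count: at some scale $R$, the tight triangle-inequalities among vertices at scale $R$ cannot all be satisfied by routing through a small number of low-scale intermediate vertices within the $(1+\e)$ budget, so $\Omega(\e^{-1})$ distinct direct edges of length in $[R,(1+\e)R]$ must be present at $v_0$ in $E'$.

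Third, given $\Omega(\e^{-1})$ such near-equal-length edges at $v_0$, pick the midpoint of each in $\conv(G')$. All these midpoints lie in the ball of radius $R/2$ around $v_0$, and any shortcut between two midpoints through their far endpoints $v_i,v_j$ costs at least $(\ell(v_0,v_i)-R/2) + d_{G'}(v_i,v_j) + (\ell(v_0,v_j)-R/2) \geq R + d_G(v_i,v_j) - O(\e R) \geq R - O(\e R)$, using the cross-distance lower bound on $d_G(v_i,v_j)$. So the midpoints are pairwise $\Omega(R)$-separated, and the existence of $\Omega(\e^{-1})$ such points in a ball of radius $R/2$ forces $\dim(\conv(G')) = \Omega(\log\e^{-1})$.

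The principal technical obstacle is the quantitative design of the cross-distance structure on $G$: the constraints must be tight enough to amplify $\log\e^{-1}$ direct edges to $\e^{-1}$ edges at a common scale (an exponential boost), yet must not destroy the constant doubling dimension of $M_G$ (which favors well-spread hierarchical metrics). Striking this balance is the point where the graph setting differs essentially from the tree setting, and is what makes the $\Omega(\log\e^{-1})$ lower bound possible for graph $G'$ while only $\Omega(\log\log\e^{-1})$ holds for tree $G'$.
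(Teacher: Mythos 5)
Your Step~3 (the midpoint argument) is exactly the mechanism the paper uses, but Steps~1 and~2 — and, crucially, the metric $G$ itself — are left as unfulfilled promises, and the outline as written has a structural problem. You start from the exponentially weighted star $K_{1,n}$, which has exactly \emph{one} leaf at each distance scale $2^i$ from the hub. Your first-hop analysis therefore produces only $\Theta(\log\e^{-1})$ forced edges, one per scale, with geometrically spread lengths; their midpoints sit at geometrically spread distances from $v_0$ and do \emph{not} congregate in a single ball, so they cannot directly witness a large uniform submetric. Step~2 is supposed to convert these into $\Omega(\e^{-1})$ near-equal-length edges at a single scale, but this ``amplification'' is asserted rather than argued, and you explicitly flag the design of the cross-distance structure as the unresolved obstacle. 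That obstacle \emph{is} the proof: without a concrete metric and a concrete argument that $(1+\e)$-distortion forces $\Omega(\e^{-1})$ edges at one scale, there is no lower bound.

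The paper sidesteps amplification entirely by choosing a metric that already has $\Theta(\e^{-1})$ mutually far points at the top scale. Take $V=\{0,1\}^p$ with $d(x,y)=2^{p-\mathrm{lcp}(x,y)}$ and $p=\log(2\e)^{-1}$; this is an ultrametric with doubling dimension $O(1)$. Any two strings differing in the first bit are at distance exactly $2^p$, and since $G'$ is forbidden Steiner points, a missing edge between $0x\in V_0$ and $1y\in V_1$ forces every $0x$--$1y$ path through a third vertex, incurring additive slack at least $2$, hence multiplicative slack $1+2^{1-p}=1+4\e>1+\e$. So $G'$ must contain the full bipartite clique between $V_0$ and $V_1$: that is $2^{2(p-1)}=\Theta(\e^{-2})$ edges, all of length in $[2^p,(1+\e)2^p]$. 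Their midpoints are pairwise at distance in $[2\cdot 2^{p-1}, 3\cdot 2^{p-1}]$, giving a near-uniform set of $\Theta(\e^{-2})$ points in $\conv(G')$ and hence $\dim(\conv(G'))=\Omega(p)=\Omega(\log\e^{-1})$. So your final midpoint step is on target, but the crucial input to it — a doubling metric in which no-Steiner-point approximation forces polynomially many same-scale edges — requires the complete-bipartite structure of the LCP metric, not a tightened star, and your proposal as written does not produce it.
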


\textbf{Bibliographic Note.} James Lee informs us that a weaker form of
Theorem~\ref{thm:main1} can be inferred from results in a paper of
Semmes~\cite{Semmes96}. Indeed, the techniques of that paper imply that
for every doubling metric $G$, there is a graph $G'$ whose convex
closure is also doubling; however, the resulting distortion between
distances in $G$ and $G'$ using this approach seems to depend on
$\dim(G)$, and it is not clear how to reduce this distortion to $(1 +
\e)$ as in the results above.

\subsection{Related Work}
\label{sec:related-work}

The notion of doubling dimension was introduced by
Assouad~\cite{Assouad83} and first used in algorithm design by
Clarkson~\cite{Cla99}.  The properties of doubling metrics and their
algorithmic applications have since been studied extensively, a few
examples of which appear
in~\cite{GKL03,KL03,KL04,Tal04,HPM05,BKL06,CG06,IndykNaor,KRX06,KRX07}.

Somewhat similar in spirit to our work is the $0$-extension
problem~\cite{Karzanov98,CKR01-zero,FHRT03}. Given a graph $G$, the
0-extension ({\em cf.} Lipschitz
Extendability~\cite{JLS86,Mat90a,LeeN04}) problem deals with
extending a (Euclidean) embedding of the vertices of the graph to an
embedding of the convex closure of the graph, while approximately
preserving the Lipschitz constant of the embedding. Our results can
be interpreted as analogues to the above where the goal is to
approximately preserve the doubling dimension.

A number of papers have dealt with geometric implications of
topological properties of the graph inducing the metric, e.g. when
the graph is planar~\cite{KPR93,Rao99}, outer-planar~\cite{GNRS99},
series-parallel~\cite{GNRS99}, or a tree~\cite{Mat98}.

\section{Preliminaries and Notation}
\label{sec:prelims}

Given a graph $G$, the shortest path metric on it is denoted by $d_G$
and we shall use $B_G(x,r)$ to denote the ``ball'' $\{y \in V_G:
d_G(x,y) < r\}$. We will often omit the subscript $G$ when it is obvious
from context. There are several ways of defining the doubling constant
$\lambda$ and the doubling dimension $\dim$ for a metric space, all of
them within a constant factor of each other: here is the one that will
be most useful for us.
\begin{definition}[Doubling  Constant and Doubling Dimension]
  \label{def:doubling}
  A metric space $(X,d)$ has \emph{doubling constant} $\lambda$ if for
  each $x \in X$ and $r \geq 0$, given the ball $B(x,2r)$, there is a
  set $S \sse X$ of size at most $\lambda$ such that $B(x,2r) \sse
  \cup_{y \in S} B(y,r)$. The \emph{doubling dimension} $\dim((X,d)) =
  \log_2 \lambda$.
\end{definition}

\begin{fact}[Subset Closed]
  \label{fct:submetric}
  Let metric $M = (V,d)$ have doubling dimension $k$. If $X' \sse X$,
  and $d' = d|_{X' \times X'}$, then $(X', d')$ has doubling dimension
  at most $k$.
\end{fact}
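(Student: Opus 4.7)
My plan is to verify Definition~\ref{def:doubling} directly for the submetric $(X', d')$ by transferring an ambient cover through the inclusion $X' \subseteq X$. Fix any $x' \in X'$ and any radius $r > 0$. Since $d'$ is the restriction of $d$ to $X'$, the ball identity $B_{X'}(x', 2r) = B_X(x', 2r) \cap X'$ holds, so any cover of the ambient ball automatically restricts to a cover of the subset ball.

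The main step is to invoke the doubling hypothesis for $(X,d)$ on the ball $B_X(x', 2r)$, producing a set $S \subseteq X$ with $|S| \le \lambda$ and $B_X(x', 2r) \subseteq \bigcup_{y \in S} B_X(y, r)$. Intersecting with $X'$ gives a cover of $B_{X'}(x', 2r)$ by sets of the form $B_X(y, r) \cap X'$, but the centers $y$ may lie outside $X'$. For each $y \in S$ with $B_X(y, r) \cap X' \neq \emptyset$, I would pick an arbitrary representative $y' \in B_X(y, r) \cap X'$; by the triangle inequality, every point of $B_X(y, r) \cap X'$ lies within $d$-distance (hence $d'$-distance) at most $2r$ of $y'$. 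Discarding the $y$'s that miss $X'$ and re-indexing yields a set $S' \subseteq X'$ of size at most $\lambda$ whose $d'$-balls cover $B_{X'}(x', 2r)$.

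The one subtle point---and the main (minor) obstacle---is the bookkeeping of radii: translating a center from $y$ to $y' \in B_X(y, r)$ enlarges the effective radius by a factor of two, so the argument above actually supplies $\lambda$ balls of radius $2r$ instead of $r$. To produce balls of radius exactly $r$ centered in $X'$, I would first apply the doubling property at the finer scale $r/2$ inside $X$ (covering $B_X(x', 2r)$ by at most $\lambda^2$ balls of radius $r/2$), and only then pick $X'$-representatives, so that each small ball of radius $r/2$ meeting $X'$ lands inside a ball of radius $r$ in $X'$ via the same triangle-inequality step. This is the only point requiring care; the remainder is direct bookkeeping, and the dimension bound follows up to absolute constants as stated.
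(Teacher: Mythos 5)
The paper states Fact~\ref{fct:submetric} without proof, treating it as a standard observation, so there is no in-paper argument to compare against. Your blind proof is essentially correct and is the natural direct verification of Definition~\ref{def:doubling}. You correctly isolate the one genuine subtlety: the covering centers produced by the ambient doubling property live in $X$, not $X'$, and since the definition requires the cover set $S$ to lie inside the metric space under consideration, they must be relocated. Your two-scale fix --- first cover $B_X(x',2r)$ at radius $r/2$ using doubling twice, then in each small ball that meets $X'$ pick an arbitrary representative in $X'$ and enlarge to radius $r$ via the triangle inequality --- is sound. The bookkeeping gives at most $\lambda^2$ covering balls, hence doubling dimension at most $2k$ rather than the $k$ in the statement; you flag this loss yourself, and it is harmless here since the paper fixes the notion of dimension only up to multiplicative constants (cf.\ the remark preceding Definition~\ref{def:doubling}) and every downstream use of this fact is an $O(\cdot)$ bound.

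One remark worth making: the constant-factor loss can be avoided entirely by working with the packing characterization of doubling dimension (the maximum number of pairwise $r/2$-separated points inside a ball of radius $r$), which the paper gestures at in Facts~\ref{fct:unif-metrics} and~\ref{fct:net-size}. Under that characterization monotonicity under subsets is immediate --- any $r/2$-separated set in a ball of $X'$ is literally an $r/2$-separated set in the corresponding ball of $X$ --- with no relocation of centers and no exponent loss. Given the paper's chosen covering definition, though, your re-centering argument is the appropriate direct route, and it is correct.
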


\begin{fact}[Small Uniform Metrics]
  \label{fct:unif-metrics}
  If a metric $M = (V, d)$ has doubling dimension $k$ then there exists
  a point $x$ and a radius $r$ such that the ball $B(x,r)$ contains at
  least $2^k$ points with interpoint distances at least $r/2$.
\end{fact}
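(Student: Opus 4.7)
The plan is to exploit the minimality implicit in the definition of the doubling constant. Since $\dim(M) = k$, the value $\lambda = 2^k$ is the smallest for which every ball $B(x, 2s)$ admits a cover by $\lambda$ balls of radius $s$; otherwise $\lambda - 1 = 2^k - 1$ would already be a valid doubling constant, contradicting $\dim(M) = k$. Hence there must exist a ``witness'' ball $B(x_0, 2r_0)$ that cannot be covered by fewer than $2^k$ balls of radius $r_0$.

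From this witness I would extract the desired separated set via the standard packing--covering duality. Greedily choose a maximal $S \subseteq B(x_0, 2r_0)$ with pairwise distances at least $r_0$. By maximality, any $z \in B(x_0, 2r_0) \setminus \bigcup_{y \in S} B(y, r_0)$ would be at distance $\geq r_0$ from every element of $S$ and could therefore be added to $S$ while preserving the separation --- a contradiction. Thus $\{B(y, r_0)\}_{y \in S}$ already covers $B(x_0, 2r_0)$, so $S$ exhibits a cover of $B(x_0, 2r_0)$ by balls of radius $r_0$; combined with the lower bound from the previous paragraph, this forces $|S| \geq 2^k$. Setting $x = x_0$ and $r = 2r_0$ completes the proof: $B(x, r)$ contains all of $S$, and any two points of $S$ are at distance at least $r_0 = r/2$.

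The only mild subtlety is justifying that $\lambda = 2^k$ should be read as the \emph{smallest} valid doubling constant. This is the standard convention that makes the dimension $\log_2 \lambda$ well-defined; without it, a metric could be called $k$-doubling for arbitrarily large $k$ and the statement of Fact~\ref{fct:unif-metrics} would be vacuous. Beyond this bookkeeping I anticipate no real obstacle, since the argument is just the textbook equivalence between packing numbers and covering numbers, applied to one single carefully-chosen ball.
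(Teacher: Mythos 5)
The paper states Fact~\ref{fct:unif-metrics} without proof, so there is no ``paper proof'' to compare against; your argument is the standard packing--covering duality and it is correct. You correctly read the doubling constant $\lambda=2^k$ as the \emph{smallest} value for which the covering property holds, extract a witness ball $B(x_0,2r_0)$ that cannot be covered by fewer than $2^k$ balls of radius $r_0$, observe that a maximal $r_0$-separated subset of $B(x_0,2r_0)$ is automatically an $r_0$-cover of it (so it must have size at least $2^k$), and then take $x=x_0$, $r=2r_0$ so that the separation $r_0$ equals $r/2$ as required.
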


Given a metric $(X,d)$, an \emph{$r$-packing} is a subset $P \sse X$
such that any two points in $P$ are at least distance $r$ from each
other. An \emph{$r$-covering} is a subset $C \sse X$ such that for each
point $x \in X$, there is a point $c \in C$ at distance $d(x,c) \leq r$.
An \emph{$r$-net} is a subset $N \sse X$ that is both an $r$-packing and
an $r$-covering.
\begin{fact}[``Small'' Nets]
  \label{fct:net-size}
  Let metric $M = (V,d)$ have doubling dimension $k$, and $N$ is an
  $r$-net of $M$, then for any $x \in V$ and radius $R$, the set $B(x,R)
  \cap N$ has size at most $(4R/r)^k$.
\end{fact}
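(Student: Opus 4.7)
The plan is to combine two ingredients: the iterated doubling property to cover a big ball by small balls, and the packing property of a net to bound how many net points can sit inside one small ball.

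First, I would iterate the doubling condition. By Definition 3.3, every ball of radius $2\rho$ can be covered by $\lambda = 2^k$ balls of radius $\rho$. Rescaling and iterating $t$ times, $B(x,R)$ can be covered by $\lambda^t$ balls of radius $R/2^t$. I want these covering balls small enough that each contains at most one point of $N$, which motivates choosing $t$ so that $R/2^t \le r/2$. The minimal such integer is $t = \lceil \log_2(2R/r)\rceil$, which gives $2^t \le 2 \cdot (2R/r) = 4R/r$, and hence a cover of $B(x,R)$ by at most $\lambda^t = (2^t)^k \le (4R/r)^k$ balls of radius at most $r/2$.

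Second, I would use the packing property. Since $N$ is an $r$-net, in particular an $r$-packing, any two points of $N$ are at distance at least $r$ apart. But the open ball $B(y,r/2)$ has diameter strictly less than $r$, so it can contain at most one point of $N$. Thus each of the $(4R/r)^k$ covering balls contributes at most one net point, and summing gives $|B(x,R)\cap N| \le (4R/r)^k$.

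The only subtlety is a cosmetic edge case: if $R$ is much smaller than $r$, then $(4R/r)^k$ can be less than $1$, while $B(x,R)\cap N$ may still contain one point (e.g.\ $x$ itself if $x\in N$). In the regime where this fact is actually invoked (namely $R \gtrsim r$) this is a non-issue, and for the uninteresting small-$R$ regime one can interpret the bound as $\max(1,(4R/r)^k)$; I would mention this in a single sentence rather than belabour it. No serious obstacle arises—the proof is really just "doubling $\Rightarrow$ covering, packing $\Rightarrow$ at most one per cover ball", executed with the right choice of $t$.
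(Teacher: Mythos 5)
Your proof is correct: the iterated doubling cover by $\lambda^t \le (4R/r)^k$ balls of radius at most $r/2$, combined with the packing property of the $r$-net (each open ball of radius $r/2$ has diameter strictly less than $r$, hence holds at most one net point), is exactly the standard argument, and your choice $t=\lceil \log_2(2R/r)\rceil$ gives the right constant. The paper states this fact without proof, so there is nothing to contrast with; your handling of the trivial small-$R$ edge case (where the bound should be read as $\max(1,(4R/r)^k)$) is a fair and honest observation about the statement as written.
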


\begin{definition}[Geodesic Metrics] A metric
$(X,d)$ is said to be {\em geodesic} if for every $u,v \in X$, $u
\neq v$, there is a continuous map $f_{uv}: [0,d(u,v)] \rightarrow
X$ such that $f_{uv}(0)=u$, $f_{uv}(d(u,v))=v$ and
$d(f_{uv}(x),f_{uv}(y)) = |x-y|$ for any $x,y \in [0,d(u,v)]$.
\end{definition}

\begin{fact}
For any graph $G=(V,E)$, the metric $\conv(G)$ is a geodesic metric
space.
\end{fact}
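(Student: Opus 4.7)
The plan is to exhibit, for any pair of points $p,q \in \Vbar_G$, an explicit continuous isometric map $f_{pq}:[0,\dbar_G(p,q)] \to \Vbar_G$ with $f_{pq}(0)=p$ and $f_{pq}(\dbar_G(p,q))=q$. Every point of $\Vbar_G$ is either a vertex in $V$ or of the form $e[x]$ for some edge $e=\{u,v\}$ with $u\prec v$ and $x\in(0,\ell(e))$; I would treat each edge as a parametrized unit-speed segment of length $\ell(e)$, so the restriction of $f_{pq}$ to any sub-interval that lies inside a single edge is just the natural arc-length parametrization of that edge.

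First I would handle the base case $p,q\in V$. Any shortest path in $G$ between $p$ and $q$ is a walk along edges $e_1,e_2,\dots,e_k$; concatenating the arc-length parametrizations of these edges in sequence yields a continuous map on $[0,d_G(p,q)]$ whose total length equals $d_G(p,q)=\dbar_G(p,q)$. Next I would handle the general case $p=e[x]$, $q=e'[y]$: the four-way minimum in the definition of $\dbar_G(p,q)$ corresponds precisely to the four choices of an endpoint of $e$ and an endpoint of $e'$ through which the geodesic can be routed. I pick the pair $(a,b)$ attaining the minimum and concatenate (i) the portion of $e$ of length either $x$ or $\ell(e)-x$ from $p$ to $a$, (ii) a $G$-shortest path from $a$ to $b$ constructed as above, and (iii) the portion of $e'$ from $b$ to $q$. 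Concatenation and reparametrization by arc length yield the desired $f_{pq}$.

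The remaining step is verifying that $f_{pq}$ is an isometric embedding, i.e., that for all $s,t\in[0,\dbar_G(p,q)]$ one has $\dbar_G(f_{pq}(s),f_{pq}(t))=|s-t|$. The $\leq$ direction is immediate because the restriction of $f_{pq}$ to $[s,t]$ is itself a continuous path of length $|s-t|$, and its endpoints again satisfy the four-case formula (so $\dbar_G$ is at most this length). For the $\geq$ direction, suppose $s<t$ and set $a=f_{pq}(s)$, $b=f_{pq}(t)$; applying the triangle inequality to $\dbar_G$ gives $\dbar_G(p,q)\leq \dbar_G(p,a)+\dbar_G(a,b)+\dbar_G(b,q)$, and since $\dbar_G(p,a)\leq s$ and $\dbar_G(b,q)\leq \dbar_G(p,q)-t$ by the same four-case argument applied to the sub-paths, we deduce $\dbar_G(a,b)\geq t-s$.

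The main obstacle will be checking the triangle inequality for $\dbar_G$ itself: since the definition is a minimum over four concatenation choices, one must verify by a short case analysis (effectively, for any third point $r$, the candidate $p\!\to\!r\!\to\!q$ route can be rewritten as one of the four routes that appear in the $\dbar_G(p,q)$ minimum, using the fact that $d_G$ is already a metric on $V$). Once this is established, every step above is routine, and the fact follows.
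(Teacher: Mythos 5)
The paper states this Fact without any proof, treating it as folklore, so there is no argument to compare against; your construction is the natural one and is correct. Routing the geodesic from $p=e[x]$ to $q=e'[y]$ through the endpoint pair achieving the four-case minimum, concatenating the two partial edge segments with a $G$-shortest edge-path, and parametrizing by arc length does give an isometric embedding of $[0,\dbar_G(p,q)]$, and you correctly identify that the lower bound $\dbar_G(f_{pq}(s),f_{pq}(t))\geq |s-t|$ reduces to the triangle inequality for $\dbar_G$. The case analysis you sketch for that inequality is the right one and does close the gap: a route $p\to r\to q$ with $r=e''[z]$ enters $e''$ at some endpoint $\alpha$ and leaves at some endpoint $\beta$; when $\alpha=\beta$ one applies the triangle inequality for $d_G$ to the middle leg, and when $\alpha\neq\beta$ one additionally uses $\ell(e'')\geq d_G(\alpha,\beta)$, so in every case the route cost dominates one of the four terms in $\dbar_G(p,q)$. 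One detail worth making explicit when you write this out: the displayed formula for $\dbar_G$ tacitly assumes $e\neq e'$; when $e=e'$ the stay-on-the-edge option of cost $|x-y|$ must also be allowed (your phrase about restricting to a single edge being arc-length parametrization already assumes this), and it should appear as a fifth case when verifying that $\dbar_G$ is a metric and that the geodesic realizes the distance.
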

\section{A Structure Theorem}
\label{sec:structure-theorem}

In this section, we show how to characterize the dimension of the
convex closure of a graph $H$ in terms of some easier-to-handle
parameters of the graph.
\begin{definition}[Long Edges]
  \label{def:long}
  Given a graph $H = (V,E)$, a vertex $u \in V$ and a radius $r \geq 0$,
  call an edge $e = \{v,w\}$ a \emph{long edge} with respect to $u,r$ if
  one endpoint of $e$ is at distance at most $r$ from $u$, and $l(e) > r$.
\end{definition}

Let the set of long edges with respect to $u,r$ be denoted by
$L_u(r)$. The following structure theorem gives us a
characterization of the doubling dimension of $\conv(H)$ in terms of
the number of long edges.
\begin{theorem}[Structure Theorem]
  \label{thm:struct}
  There exist constants $c_1$ and $c_2$ such that the following holds.
  Consider any graph $H = (V,E)$, and any $k \geq \dim_H$: if the
  number of long edges $|L_u(r)| \leq 2^{k}$ for every $u \in V$ and
  every $r \geq 0$, then the doubling dimension of the convex closure
  $\conv(H)$ is at most $c_1 k$. Moreover, if the
  doubling dimension of the convex closure $\conv(H)$ is at most $k$,
  then for every vertex $u \in V$ and every radius $r \geq 0$, the
  number of long edges $|L_u(r)| \leq 2^{c_2 k}$.
\end{theorem}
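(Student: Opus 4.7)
I will prove both directions separately. For the converse direction (high convex-closure dimension $\Rightarrow$ many long edges), the plan is to exhibit many far-apart points in a small ball. Given $u \in V$ and $r > 0$, for each $e \in L_u(r)$ let $v_e$ be an endpoint of $e$ with $d_H(u,v_e) \le r$ and let $p_e \in \Vbar_H$ be the point on $e$ at length $r/2$ from $v_e$ (which exists since $\ell(e) > r$). Clearly $\dbar(u,p_e) \le d_H(u,v_e) + r/2 \le 3r/2$, so all the $p_e$ lie in $B_{\conv(H)}(u,\,3r/2)$. On the other hand, any path in $\conv(H)$ from $p_e$ to $p_{e'}$ must exit $e$ through $v_e$ (at cost $r/2$) or through the other endpoint (at cost $\ell(e)-r/2 > r/2$), and similarly must enter $e'$; so the $p_e$ are pairwise at distance $\ge r$. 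A doubling ball of radius $3r/2$ with $|L_u(r)|$ points at mutual distance $\ge r$ forces $|L_u(r)| \le \lambda^{O(1)}$, i.e., $|L_u(r)| \le 2^{c_2 k}$.

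For the main direction, fix $x \in \Vbar_H$ and $r > 0$; I want to cover $B_{\conv(H)}(x,2r)$ by $2^{O(k)}$ balls of radius $r$. First reduce to the case $x \in V$: if no vertex of $V$ lies within distance $2r$ of $x$, then $x$ is in the interior of an edge both of whose endpoints are far, so $B(x,2r)$ is just a length-$4r$ segment, covered by $O(1)$ balls. Otherwise pick $u \in V$ within distance $2r$ of $x$ and instead cover the larger ball $B_{\conv(H)}(u,4r)$.

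To cover $B_{\conv(H)}(u,4r)$, note that every point in it either is a vertex within $d_H$-distance $4r$ of $u$, or lies on an edge $e$ at least one of whose endpoints lies in $V \cap B_H(u,4r)$. I use two types of covering balls. \emph{Type (a): vertex balls.} Take an $(r/4)$-net $N_0$ of $V \cap B_H(u,4r)$ inside $M_H$; by Fact~\ref{fct:net-size}, $|N_0| \le 2^{O(k)}$. The $r$-balls around $N_0$ cover all of $V \cap B_H(u,4r)$ together with every portion of every edge that lies within length $r/2$ of one of its endpoints; in particular they cover every edge of length $\le r$ that meets $B(u,4r)$. \emph{Type (b): long-edge balls.} For each scale $s = 2^j r$ with $j \ge 0$, take an $s$-net $N_s$ of $V \cap B_H(u,4r)$; by Fact~\ref{fct:net-size}, $|N_s| \le (16r/s)^k$. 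Every edge $e$ of length in $(2^j r, 2^{j+1} r]$ meeting $B(u,4r)$ has an endpoint in some $B_H(z,s)$ with $z \in N_s$, so by hypothesis at most $|L_z(s)| \le 2^k$ such edges are assigned to any one $z$. The portion of such an edge inside $B(u,4r)$ has length at most $8r$ and is covered by $O(1)$ $r$-balls placed along the edge. Thus the number of Type~(b) balls is at most $\sum_{j \ge 0} O(1) \cdot (16 \cdot 2^{-j})^k \cdot 2^k$, a geometric series summing to $2^{O(k)}$.

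The only real obstacle is bookkeeping across the infinite family of length scales in Type~(b): we must simultaneously use the doubling dimension of $M_H$ (to bound the number of net centers at scale $s$) and the long-edge hypothesis (to bound the edges of length $>s$ per center), and see that the product falls off geometrically in $j$ so that the sum is still $2^{O(k)}$. Everything else — the reduction to $u \in V$, the separation of the $p_e$ in the converse, and the observation that short edges are absorbed into Type~(a) — is straightforward. Combining the two bounds gives the theorem with absolute constants $c_1, c_2$.
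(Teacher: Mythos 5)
Your converse direction is correct and essentially the paper's argument: you pick points at depth $r/2$ into each long edge, observe they lie in $B(u,3r/2)$ and are pairwise at distance $\ge r$ (since any path between points on two distinct long edges must pay $\ge r/2$ to exit each edge), and conclude via the definition of doubling. The high-level plan of the main direction — cover $V\cap B_H(u,O(r))$ by a net and separately cover the pieces of long edges — is also aligned with the paper.

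However, the Type~(b) counting has a genuine gap. You bound the number of scale-$j$ net points by $|N_{2^j r}| \le (16r/s)^k = (16\cdot 2^{-j})^k$ and sum a geometric series over $j\ge 0$. But Fact~\ref{fct:net-size} cannot actually give a bound below $1$: as soon as $s > 8r$ (the diameter of $V\cap B_H(u,4r)$), the net consists of exactly one point, so the correct bound for $j\ge 4$ is $\max(1,(16\cdot 2^{-j})^k)=1$, and your sum becomes $\sum_{j\ge 4} 1\cdot 2^k$, which diverges. Concretely, there is no a priori reason the number of \emph{scales} $j$ at which a long edge touches $B_H(u,4r)$ is bounded, so summing a per-scale cost of $2^k$ over all $j$ is unsafe. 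The fix is to notice that the multi-scale decomposition is unnecessary: every edge of length $>r$ with an endpoint in $B_H(u,4r)$ already belongs to $L_z(r)$ for some $z$ in a single $r$-net $N_r$ of $V\cap B_H(u,4r)$ (since $L_z(r)$ counts \emph{all} edges of length $>r$, not just those in $(r,2r]$), giving $\le |N_r|\cdot 2^k \le 2^{O(k)}$ long edges total in one shot. Alternatively, cap the sum at $j=2$ and absorb all longer edges into $L_u(4r)$.

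The paper's proof is simpler still: it covers $B_{\conv(H)}(u,2r)$ by balls of radius $3r/2$ rather than $r$, and observes that any $e[x]\in B(u,2r)$ with $x>r$ forces $d(u,v)<r$ (where $v$ is the near endpoint), so $e\in L_u(r)$ directly — no net over the vertices is needed for the long-edge part, and no iteration over scales. One then adds only the $\le 2\cdot 2^k$ points $\{e[r],e[\ell(e)-r]: e\in L_u(r)\}$ to a $(r/2)$-cover of the vertices and iterates the $\tfrac34$-ratio covering three times. You should also double-check your Type~(a) claim: the $r$-balls around $N_0$ only cover edge-portions within $r/2$ of an endpoint that itself lies in $B_H(u,4r)$; if the near endpoint is outside the ball this fails, and such portions are instead handled by Type~(b) — worth stating explicitly.
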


\begin{proof} Suppose the number of long edges $|L(u,r)|$ is at most $2^k$ for every $u,r$, then we
  show that for any $u\in \conv(H)$ and any $r>0$, the ball
  $B_{\conv(H)}(u,2r)$ can be covered by at most $2^{O(k)}$ balls
  $B_{\conv(H)}(y,\frac{3}{2}r)$. Repeating this argument three times
  (since $(\frac{3}{4})^3 < \frac{1}{2}$), this suffices to prove that
  the doubling dimension of $\conv(H)$ is $2^{O(k)}$.

  First, consider $u \in V(H)$. From the definition of doubling
  dimension, there is a set $S \subset V_H$ of size at most $2^{2\dim_G}
  \leq 2^{2k}$ such that $V_H \cap B(u,2r) \subseteq \cup_{y\in S} B(y,
  \frac{r}{2})$. Let $S' = S \cup \{e[r] \mid e \in L_u(r)\} \cup \{e[l(e)-r] \mid e \in L_u(r)\}$.  Clearly,
  $|S'| \leq 2^{O(k)}$. We shall show that $B(u,2r)$ is contained in
  $\cup_{y\in S'} B(y,\frac{3}{2}r)$.

  Let $e[x] \in B(u,2r)$, where $e = \{v,w\}$ such that $d(u,e[x]) =
  d(u,v)+x'$ where $x' \in \{x,l(e)-x\}$. Assume $v \prec w$ (the other case is similar) so that $x'=x$. If $x \leq r$, then consider $y
  \in S$ such that $v \in
  B(y,\frac{r}{2})$. Clearly $d(y,e[x]) \leq d(y,v)+x < \frac{3}{2}r$.
  Hence $e[x] \in B(y,\frac{3}{2}r)$. On the other hand, if $x > r$ and
  $e[x] \in B(u,2r)$, then $d(u,v) = d(u,e[x]) - x \leq r$ so that the
  edge $e$ is long with respect to $(u,r)$. Thus $e[r] \in S'$ and since
  $0\leq x \leq 2r$, we conclude that $e[x] \in B(e[r],r)$. Thus for any
  $u \in V_H$, $B(u,2r) \subseteq \cup_{y\in S'} B(y,\frac{3}{2}r)$.

  Finally, we have to consider balls around vertices in $\conv(H)
  \setminus V(H)$: note that for $e=\{u,v\}$, $B(e[x],2r) \subseteq
  B(u,2r) \cup B(v,2r) \cup \{e[z] \mid \max(0,x-2r) \leq z \leq
  \min(l(e),x+2r)\}$. By the argument above, the first two can be covered
  by a $2^{O(k)}$ balls of radius $r$ each. The subset of $e$ in
  $B(e[x],2r)$ is one dimensional and thus can be covered by two balls
  of radius $r$ each. This completes the argument showing that if the
  number of long edges is small, the convex completion has a small
  doubling dimension.

  For the converse, we shall show that $\dim(\conv(H)) \geq \Omega(\log
  \max_{u,r} L_u(r))$. Indeed consider the set of points $W=\{e[\frac{r}{2}] \mid
  r \in L_u(r)\}$. It is easy to see that $W \subseteq B(u,2r)$ but the
  balls $\{B(w,\frac{r}{2}) \mid w\in W\}$ are all disjoint. Thus $\dim(\conv(H))
  \geq \frac{1}{2}\log |W| = \frac{1}{2}\log |L_u(r)|$, whence the claim follows.
\end{proof}

The following simple result follows immediately.
\begin{corollary}
For any $n$ point metric $(X,d)$, there exists a geodesic metric
$(X',d')$ that contains an isometric copy of $(X,d)$ and has
doubling dimension at most $O(\log n)$.
\end{corollary}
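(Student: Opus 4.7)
The plan is to apply the Structure Theorem directly to a simple graph constructed from $(X,d)$. Specifically, I would take $H$ to be the complete graph on vertex set $X$, and assign to each edge $\{x,y\}$ the length $\ell(\{x,y\}) = d(x,y)$. By the triangle inequality for $d$, the shortest path between any two vertices in $H$ is the single edge between them, so the shortest-path metric $M_H$ is isometric to $(X,d)$. The convex closure $\conv(H)$ is then a geodesic metric space (by the fact stated in the preliminaries) that contains $(X,d)$ isometrically, so it only remains to bound $\dim(\conv(H))$.

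To invoke Theorem~\ref{thm:struct}, I would pick $k = c \log n$ for a sufficiently large constant $c$. First, $\dim_H \leq \log_2 n$ trivially because $|V(H)| = n$, so the hypothesis $k \geq \dim_H$ is satisfied. Second, the total number of edges in $H$ is $\binom{n}{2} < n^2$, so for every $u \in V(H)$ and every radius $r \geq 0$ the set of long edges satisfies
\[
|L_u(r)| \leq |E(H)| < n^2 \leq 2^{k}
\]
for $c \geq 2$. The Structure Theorem then immediately gives $\dim(\conv(H)) \leq c_1 k = O(\log n)$, which is the desired bound.

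There is no real obstacle here: the content of the corollary is essentially a direct instantiation of the ``long edges are few'' hypothesis of Theorem~\ref{thm:struct}, and using the complete graph ensures both that distances are preserved exactly and that the number of long edges is controlled by the trivial global count of edges. The only thing worth double-checking is that the complete graph construction does not accidentally shorten distances; this is exactly the triangle inequality, which holds because $(X,d)$ is a metric.
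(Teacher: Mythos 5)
Your proof is correct and matches what the paper has in mind: the corollary is stated right after the Structure Theorem precisely because one applies it to the complete graph on $X$ (which preserves distances exactly by the triangle inequality), using the trivial bounds $\dim_H \le \log_2 n$ and $|L_u(r)| \le \binom{n}{2}$. Nothing is missing.
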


The example of $K_{1,n}$ with exponential edge weights shows that
this bound is tight, even when $(X,d)$ itself has constant doubling
dimension. In the following sections, we show that when $(X,d)$
indeed has small doubling dimension, the $O(\log n)$ bound above can
be improved considerably if one allows a small distortion.

\section{Convex Completions for Graphs: Proof of  Main Theorem}
\label{sec:bd-spanner}

In this section, we show how to take a graph $G = (V,E)$ and obtain
a graph $G' = (V,E')$ on the same vertex set, which has (almost) the
same distances as in $G$, but whose doubling dimension does not
change by much under taking the convex closure. In particular, we
use a bounded-degree spanner construction due to Chan et
al.~\cite{CGMZ05}: they give an algorithm that given a metric
$(V,d)$ with dimension $\dim = \dim(G)$ and a parameter $\e < 1/4$,
outputs a spanner $G' = (V,E')$ such that $d(x,y) \leq d_{G'}(x,y)
\leq (1+\eps)\,d(x,y)$ for all pairs $x,y \in V$, and moreover the
degree of each vertex $x \in V$ is bounded by $\eps^{-O(\dim_G)}$.
We show that the convex closure of this spanner has doubling
dimension of $O(\dim_G \log \e^{-1})$.

\subsection{The Spanner Construction}
\label{sec:spanner-construction}

We start with a graph $G$ and carry out a series of transformations to
obtain graph $G'$. Let $\eps < \frac{1}{4}$ be given and let $\leps =
6+\lceil\log(\frac{1}{\eps})\rceil$. Without loss of generality, the
smallest pairwise distance in $G$ is at least $2^{\leps}$. We start with
some more definitions.
\begin{definition}[Hierarchical Tree]
  A hierarchical tree for a set $V$ is a pair $(\T,\phi)$, where $\T$ is
  a rooted tree, and $\phi$ is a labeling function $\phi : \T \rightarrow
  V$ that labels each node of $\T$ with an element in $V$, such that the
  following conditions hold.
  \begin{enumerate}
  \item Every leaf is at the same depth from the root.
  \item The function $\phi$ restricted to the leaves of $\T$ is a
    bijection into $V$.
  \item If $u$ is an internal node of $\T$, then there exists a child $v$
    of $u$ such that $\phi(v) = \phi(u)$. This implies that the nodes
    mapped by $\phi$ to any $x \in V$ form a connected subtree of $\T$.
  \end{enumerate}
\end{definition}

\begin{definition}[Net-Tree]
  \label{def:net-tree}
  A \emph{net tree} for a metric $(V,d)$ is a hierarchical tree
  $(\T,\phi)$ for the set $V$ such that the following conditions hold.
  \begin{enumerate}
  \item Let $N_i$ be the set of nodes of $\T$ that have height $i$. (The
    leaves have height $0$.) Let $r_0=1$, and $r_{i+1} = 2r_i$, for $i
    \geq 0$.  (Hence, $r_i = 2^i$.) Then, for $i \geq 0$,
    $\phi(N_{i+1})$ is an $r_{i+1}$-net for $\phi(N_i)$.
  \item Let node $u \in N_i$, and its parent node be $p_u$. Then,
    $d(\phi(u), \phi(p_u)) \leq r_{i+1}$.
  \end{enumerate}
\end{definition}
It is easy to see that net-trees exist for all metrics, and Har-Peled
and Mendel show how to construct a net-tree efficiently~\cite{HPM05}.

To construct their bounded-degree spanner, Chan et al.~\cite{CGMZ05}
define the following: suppose we are given a graph $G = (V,E)$, whose
shortest-path metric $(V,d_G)$ has doubling dimension $\dim_G$. Let
$\eps > 0$ and $(\T,\phi)$ be any net tree for $M$. For each $i > 0$, let
\begin{gather}
  E_i := \Big\{\{u,v\} \mid u,v \in \phi(N_i), d_G(u,v) \leq (4 +
  \frac{32}{\eps}) \cdot r_i \Big\} \setminus \bigcup_{j \leq i-1} E_{j} ,
\end{gather}
where $E_0$ is the empty set. (Here the parameters $N_i, r_i$ are as in
Definition ~\ref{def:net-tree}.)  Letting $C_{\eps}$ denote
$(4+\frac{32}{\eps})$, we note that all edges in $E_i$ have length in
$(C_{\eps}r_{i-1},C_{\eps}r_i]$.

While the graph $\widehat{G} = (V, \widehat{E} = \cup_i E_i)$ is a
$(1+\e)$-spanner for the original metric with few edges, obtaining a
bounded-degree spanner requires some modifications to the basic
construction. First, the edges in $\widehat{E}$ are directed (merely
for the purposes of the algorithm, and the proof). For each $v \in
V$ , define $i^*(v) := \max\{i| v \in \phi(N_i)\}$. For each edge
$(u,v)\in \widehat{E}$, we direct the edge from $u$ to $v$ if
$i^*(u) < i^*(v)$. If $i^*(u) = i^*(v)$, the edge can be directed
arbitrarily. Chan et al. show that each vertex $x \in V$ has
\emph{out-degree} bounded by $\beta = \eps^{-O(\dim_G)}$. Then, the
following steps are performed:
\begin{itemize}
\item Consider any vertex $x$, and all the edges that are directed
  \emph{into} $x$. These edges come from various sets $E_i$: let us
  denote by $F_i = F_i(x)$ the subset of edges directed into $x$ that
  belong to $E_i$.
\item Suppose the non-empty subsets are $F_{i_1}, F_{i_2}, \ldots,
  F_{i_t}$, where $i_j < i_{j+1}$. We do nothing to the first $7 \log \e^{-1}$ of these edge
  sets; these contribute $\eps^{-O(\dim_G)}$ to the final degree of~$x$.
\item Consider a value of $j > 7 \log \e^{-1}$: from the set $F_{i_{(j -
      7 \log \e^{-1})}}$ of edges directed \emph{into} $x$, we choose an
  arbitrary one $\{u,x\}$.  We replace edges of the form $\{y,x\} \in
  F_{i_j}$ by edges $\{y,u\}$---and refer to these (at most
  $\e^{-O(\dim_G)}$) edges as edges \emph{donated} from $x$ to
  $u$.

  Note that the length of the edge $\{u,x\}$ is at most $C_\e \,
  2^{i_{(j - 7 \log \e^{-1})}} \leq C_\e \, \e^72^i$, whereas the length of
  any edge in $\{y,x\} \in F_{i_j}$ is at least $C_\e\, 2^{i-1}$; hence
  $d_G(u,x) \leq (\e^7/2) d_G(x,y) \leq \e^6 d_G(x,y)$, since $\e \leq 1/4$.
  By the triangle inequality, $d_G(u,y) \in (1 \pm \e^6) d_G(x,y)$.

  Additionally, note that if $x$ donates a long edge $(x,y) \in
  F_{i_j}$ to $u$, then $(u,x) \in F_{i_{j - 7 \log \e^{-1}}}$ so that $d_G(x,u)$ is
  at least $\smash{C_{\e}2^{(i_{j - 7 \log \e^{-1}})-1}}$.
\end{itemize}

\begin{theorem}[\cite{CGMZ05}]
  \label{thm:cgmz}
  The spanner thus constructed has degree $\e^{-O(\dim_G)}$ and stretch
  $(1+\e)$.
\end{theorem}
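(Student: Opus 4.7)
The plan is to verify the two claims---stretch $(1+\e)$ and degree $\e^{-O(\dim_G)}$---separately, exploiting the net-tree structure on which the construction is built.

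\emph{Stretch.} First I would establish that the preliminary spanner $\widehat{G} = (V, \bigcup_i E_i)$ already has stretch at most $(1+\e/2)$: for any $a,b\in V$ with $d_G(a,b)\in (C_\e r_{i-1}, C_\e r_i]$, the level-$i$ net-ancestors $u$ of $a$ and $v$ of $b$ satisfy $d_G(u,v)\leq C_\e r_i$, so $\{u,v\}$ belongs to some $E_j$ with $j\leq i$; concatenating this edge with the net-tree paths $a\rightsquigarrow u$ and $v\rightsquigarrow b$ of total length $O(r_i)=O(\e\cdot d_G(a,b))$ gives the desired $(1+O(\e))$-approximation. Next I would show that donations inflate stretch only by $(1+O(\e^6))$: whenever an edge $\{y,x\}\in F_{i_j}(x)$ is donated to $u$, the new edge $\{y,u\}$ satisfies $d_G(y,u)\in (1\pm\e^6)\, d_G(x,y)$ and, crucially, the edge $\{u,x\}$ itself is retained because $(u,x)\in F_{i_{j-7\log\e^{-1}}}(x)$ lies in the first $7\log\e^{-1}$ nonempty in-edge buckets of $x$. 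Hence any $\widehat{G}$-path using $\{y,x\}$ can be rerouted in $G'$ as $y\to u\to x$, adding at most $d_G(u,x)\leq \e^6 d_G(x,y)$; summing over all donated edges along a shortest path yields total stretch $(1+\e/2)(1+O(\e^6))\leq 1+\e$.

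\emph{Degree.} Each bucket $F_i(x)$ has size at most $\e^{-O(\dim_G)}$ by Fact~\ref{fct:net-size}, applied to the level-$i$ net inside the ball of radius $C_\e r_i$ around $x$; a level-by-level volume argument then gives out-degree $\e^{-O(\dim_G)}$ in $\widehat{G}$, and out-degrees are unaffected by the donation step. For in-degrees after donation, $x$ retains in-edges from only the first $7\log\e^{-1}$ nonempty buckets, totaling $\e^{-O(\dim_G)}\log\e^{-1}=\e^{-O(\dim_G)}$. To bound the donations received by $u$, I would observe that $u$ is chosen as a donation target at $x$ only if $(u,x)$ is already a directed $\widehat{G}$-edge from $u$ to $x$, and for each such edge at level $i_{j'}$ this selection is made at most once---namely while processing level $i_{j'+7\log\e^{-1}}$ at $x$---contributing $|F_{i_{j'+7\log\e^{-1}}}(x)|\leq \e^{-O(\dim_G)}$ donated edges. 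Summing over the $\e^{-O(\dim_G)}$ out-neighbors of $u$ gives another $\e^{-O(\dim_G)}$ donated in-edges, for a grand total of $\e^{-O(\dim_G)}$.

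The main obstacle is the bookkeeping for in-degrees: one must verify that the $7\log\e^{-1}$ level-shift is large enough so that a single out-edge of $u$ absorbs donations at only one level, yet small enough that the accumulated stretch stays within $(1+\e)$ when many edges of a single shortest path are donated in parallel. A secondary subtlety is confirming that donations do not cascade---they only rewrite in-edges, and the new endpoint $u$ sits strictly lower in the net hierarchy than $x$, so a single synchronized pass over the vertices suffices.
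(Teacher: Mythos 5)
The paper does not prove Theorem~\ref{thm:cgmz}; it cites the result directly from Chan et al.~\cite{CGMZ05}, so there is no in-paper proof to compare against. Judging your sketch on its own merits: the degree bound looks sound (separating out-edges, retained in-edges, and donated in-edges, and observing that each out-edge $\{u,x\}$ of $u$ absorbs donations from at most one level of $x$'s buckets, is the right structure). The stretch argument, however, has a concrete error. You assert that when $\{y,x\}\in F_{i_j}(x)$ is donated to $u$, "the edge $\{u,x\}$ itself is retained because $(u,x)\in F_{i_{j-7\log\e^{-1}}}(x)$ lies in the first $7\log\e^{-1}$ nonempty in-edge buckets of $x$." This is false as soon as $j > 14\log\e^{-1}$: the donation target $\{u,x\}$ is drawn from the $(j-7\log\e^{-1})$-th nonempty bucket, which for such $j$ is itself beyond the first $7\log\e^{-1}$ buckets and hence was donated away during $x$'s own processing. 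The proposed reroute $y\to u\to x$ therefore uses an edge that need not exist in $G'$, and the claimed bound $(1+\e/2)(1+O(\e^6))\leq 1+\e$ does not follow.

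The standard repair is an induction on scale rather than a one-shot reroute. Since $d_G(u,x)\leq \e^6 d_G(x,y)$, the $u$-to-$x$ connection is a pair at a scale smaller by a factor of at least $\e^6$; one proves the $(1+\e)$-stretch bound by induction on $\lceil\log d_G\rceil$ (equivalently, on net-tree level), invokes the inductive hypothesis to route the detour, and checks that the geometrically decaying detour lengths accumulate to a total multiplicative error of $1+O(\e)$. You flag exactly this worry ("yet small enough that the accumulated stretch stays within $(1+\e)$ when many edges of a single shortest path are donated") but then discharge it with the false retention claim instead of the needed recursion, so the stretch half of the theorem is not actually established by the sketch.
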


From the construction of the bounded-degree spanner, note that each
vertex $u \in V$ has the following edges incident to it:
  \begin{itemize}
  \item \textbf{Type-A edges.} These correspond to the $\eps^{-O(\dim_G)}$
    edges that were directed \emph{away} from $u$.

  \item \textbf{Type-B edges.} These correspond to the edges directed
    \emph{into} $u$ that belong to the smallest $7\log \eps^{-1}$
    levels; this gives another $(\eps^{-O(\dim_G)})$ edges in total.

  \item \textbf{Type-C edges.} For each edge $e = \{u,x\}$ of type-A
    incident to $u$, there are at most $(\eps^{-O(\dim_G)})$ other edges
    incident to $u$ that are not counted above. Each such edge $e' =
    \{y,u\}$ corresponds to some edge of the form $\{y,x\} \in E_i$ (for
    some $i$ such that $x,y \in \phi(N_i)$), such that the edge was
    ``donated'' from $x$ to $u$ to maintain $x$'s degree bound.
  \end{itemize}

\subsection{Bounding the Dimension of the Convex Closure}
\label{sec:conv-claim}

Simply by the distortion bound, it follows that the doubling dimension
of the bounded-degree spanner $G'$ is close to $\dim_G$. Of course, the
bounded-degree does not imply that $\conv(G')$ has low doubling
dimension: in this section, we use the Structure
Theorem~\ref{thm:struct} to show this fact, and hence prove
Theorem~\ref{thm:main1}.

\begin{lemma}
  \label{lem:bd-spanner-longs}
  Given the graph $G'$ defined as above, fix any vertex $v$ and radius
  $R$, and $\e < \frac14$. Then the number of long edges $|L_v(R)|$ with
  respect to $v,R$ is at most $O(\e^{-O(\dim_G)})$.
\end{lemma}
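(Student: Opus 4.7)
The plan is to split the long edges of $G'$ according to the three edge types (A, B, C) from the spanner construction, and bound each using (i) the per-vertex degree $\beta = \eps^{-O(\dim_G)}$ guarantee from Theorem~\ref{thm:cgmz}, and (ii) the packing bound in Fact~\ref{fct:net-size}. A long edge in $G'$ (length greater than $R$) comes from some $E_{i_j} \subseteq \widehat E$ with $C_\eps r_{i_j} \geq R$, so $r_{i_j} \geq R/C_\eps$; thus, setting $i^\star = \lceil \log_2(R/C_\eps)\rceil$, the original endpoints both lie in $\phi(N_{i^\star})$ by nesting of nets. Note also that one endpoint of the long edge is in $B_{G'}(v,R) \subseteq B_G(v,R)$ since $d_G \le d_{G'}$.

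For long edges of type A, of type B, and of type C whose close endpoint is the non-receiver $y$, the close endpoint is itself a level-$i^\star$ net point in $B_G(v,R)$. By Fact~\ref{fct:net-size}, $|\phi(N_{i^\star}) \cap B_G(v,R)| \leq (4R/r_{i^\star})^{\dim_G} \leq (4C_\eps)^{\dim_G} = \eps^{-O(\dim_G)}$. Since each such vertex has degree at most $\beta$ in $G'$, the total contribution of these subtypes is $\eps^{-O(\dim_G)} \cdot \beta = \eps^{-O(\dim_G)}$.

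The remaining case is a long type-C edge $\{y,u\}$ whose close endpoint is the donation-receiver $u$. Here $u$ is not constrained to be a high-level net point, so the plan is to anchor the edge on the donor $x \in \phi(N_{i_j})$ and invoke the donation bound $d_G(u,x) \leq \eps^6 d_G(x,y) \leq \eps^6 C_\eps r_{i_j}$. By the triangle inequality, this places $x$ in $\phi(N_{i_j}) \cap B_G(v, R + \eps^6 C_\eps r_{i_j})$, whose size is controlled by Fact~\ref{fct:net-size}: in the moderate regime $r_{i_j} \in [R/C_\eps, R]$ this yields $\eps^{-O(\dim_G)}$ close-ish donors summed across the $O(\log \eps^{-1})$ levels; in the regime $r_{i_j} \gg R$, the factor $\eps^6 C_\eps = O(\eps^5)$ (valid for $\eps < 1/4$) keeps the ball's radius a small fraction of the $r_{i_j}$-packing scale, forcing only a constant (doubling) number of close-ish donors per level. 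The contributions are then controlled by combining this per-level donor bound with the per-donor donation count of at most $(4C_\eps)^{\dim_G}$, together with the bounded degree of $u$ in $G'$: each close receiver has at most $\beta$ out-edges in $\widehat E$ and hence participates in at most $\beta$ donation events, yielding at most $\beta \cdot (4C_\eps)^{\dim_G} = \eps^{-O(\dim_G)}$ type-C edges at any single $u$.

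The hard part will be this last case. The receiver $u$ itself is not pinned to a high-level net, so a straightforward ``net point in a ball'' charge does not apply. The resolution pivots through the donor $x$ via the inequality $d_G(u,x) \leq \eps^6 d_G(x,y)$, and uses the $\eps^6$ savings together with the per-vertex degree cap to ensure the total stays at $\eps^{-O(\dim_G)}$ even as the level $i_j$ ranges arbitrarily far above $i^\star$.
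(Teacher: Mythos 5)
Your decomposition into types A/B (and type-C edges whose close endpoint is the non-receiver $y$), handled via the net-packing bound of Fact~\ref{fct:net-size} combined with the degree bound $\beta$, is correct and matches the paper. Your anchoring of the hard case (type-C at the receiver $u$) on the donor $x$ via $d_G(u,x)\le\eps^6 d_G(x,y)$, and your split into moderate levels ($\eps^{-O(\dim_G)}$ donors per level over $O(\log\eps^{-1})$ levels) and high levels (essentially one donor per level by the $\eps^6 C_\eps=O(\eps^5)$ separation argument), also matches the paper.

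The gap is in the final accounting for the high levels. You establish that each high level $i$ has $O(1)$ donors near $v$, and that each vertex $u$ has at most $\eps^{-O(\dim_G)}$ type-C edges, but neither observation bounds \emph{how many} high levels $i$ can actually contribute a donated long edge to $B(v,R)$. A priori, the levels $i$ range over all of $\ell+6\log\eps^{-1}, \ell+6\log\eps^{-1}+1, \dots$ up to the metric's diameter scale, and the per-level bound alone gives an unbounded total. The receivers $u$ are not high-level net points, so you also cannot directly bound how many distinct receivers lie in $B(v,R)$. The missing ingredient is the \emph{lower bound} on the donor--receiver distance that is built into the donation rule: when $x$ donates the edge in $F_{i_j}$ to $u$, the construction guarantees $d_G(x,u)\ge C_\eps 2^{\,i_{j-7\log\eps^{-1}}-1}$, so $d_G(x,u)$ grows as $x$ donates at higher and higher levels. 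The paper combines this with $R\le C_\eps 2^{i_1+1}$ and $d_G(v,x)\le(2+\eps^6)C_\eps 2^{i_1}$ (anchored at the smallest donation level $i_1$ reaching $B(v,R)$) to show that a donor can donate a long edge into $B(v,R)$ at only $O(\log\eps^{-1})$ levels before $d_G(x,u)$ necessarily exceeds $d_G(x,v)+R$, forcing $u\notin B(v,R)$. Your proposal only uses the \emph{upper} bound $d_G(u,x)\le\eps^6 d_G(x,y)$ and never invokes this lower bound, so the sum over high levels is left uncontrolled; the appeal to the per-vertex degree cap does not substitute for it.
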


\begin{proof}
  Recall that $L_v(R)$ is the set of edges that have one endpoint within
  the ball $B(v,R)$, and have length at least $R$.
  Define $\ell \in \Z_{\geq 0}$ such that $R \in (C_\e 2^{\ell-1}, C_\e
  2^{\ell}]$.

  By the spanner construction, any type-A or type-B edge that is long
  must belong to $\cup_{i \geq \ell}\, E_i$, and hence must have both
  endpoints in $\phi(N_\ell)$. Moreover, one endpoint of each such a
  long edge must lie in the ball $B(v,R) \sse B(v, C_\e 2^{\ell})$;
  since the points in $\phi(N_\ell)$ are at distance at least $2^\ell$
  from each other, there can be at most $(C_\e)^{O(\dim_G)}$ many such
  endpoints within the ball. Moreover, each one of these endpoints has
  at most $\eps^{-O(\dim_G)}$ type-A or type-B edges; multiplying them
  together, using the fact that $C_\e = O(\e^{-1})$, and simplifying
  gives an upper bound of $\eps^{-O(\dim_G)}$ on the number of type-A and
  type-B edges in $L_v(R)$.

  Let us now consider the edges in $L_v(R)$ that are of type-C with
  respect to their endpoint within $B(v,R)$. Recall that each type-C
  edge $\{u,y\}$ can be associated with some edge $\{x,y\} \in
  \widehat{E}$ (of almost the same length---up to a factor of $(1 \pm
  \e^6)$) such that $x$ donates the edge to $u$. Let us fix one such
  long edge $e = \{u,y\}$ associated with $\{x, y\} \in E_i$---hence the
  distance $d_G(x,y) \in (C_\e 2^{i-1}, C_\e 2^i]$, and also $x,y \in
  \phi(N_i)$.  By the construction of the type-C edges, the distance
  $d_G(u,x) \leq \e^6 \cdot d_G(x,y)$, and hence $x$ lies in the ball $B(v,
  R + \e^6 C_\e 2^i)$.

  Given any fixed level $i \geq \ell-1$, the number of donor vertices is
  bounded by the number of points in $B(v, C_\e (2^\ell + \e^6 2^i))$
  that are at least $2^i$ distance apart from each other, which can be
  loosely bounded by $\e^{-O(\dim_G)}$. Each such donor vertex could
  donate $\e^{-O(\dim_G)}$ edges, which would give us a total of
  $\e^{-O(\dim_G)}$ edges for the level $i$. Summing this over all levels
  would give us too many edges, so we use this bound only for levels $i$
  such that $\ell - 1 \leq i \leq \ell + O(\log \e^{-1})$.

  Consider any level $i > \ell + 6\log \e^{-1}$: any donor vertex for
  such a level must lie in the ball $B(v, C_\e ( 2^\ell + \e^6 2^i))
  \sse B(v, C_\e \e^6 2^{i+1}) \sse B(v, C_\e\, \e^5\,2^i)$. A little
  algebra shows that
  \[ \e^5 C_\e = \e^5 (4 + \frac{32}{\e}) \leq \e^5
  \frac{33}{\e} \leq \e^4 \cdot 33 \leq \e,\] and thus the donor vertex
  must be at distance at most $\e 2^i$ from $v$. However, since the
  donor vertex must belong to $\phi(N_i)$, it must be at distance at
  least $2^i$ from any other donor vertices. Now, if there were two
  donor vertices at distance $\e 2^i$ from $v$, they would be at
  distance $2\e 2^i < 2^i$ from each other---this implies that there can
  be at most one donor vertex for such a ``high'' level.

  Finally, it remains to show that the total number of long edges
  donated by this donor vertex $x$ to vertices in $B(v,R)$ is small.
  Let $i_1,i_2,\ldots,i_t$, $i_j < i_{j+1}$ be the levels for which $x$
  donates a long edge to vertices in $B(v,R)$; we shall show that
  $t$ is at most $O(\log \e^{-1})$. Since the first edge is long, $R \leq
  C_\e  2^{i_1+1}$. Moreover, since $x$ donates this edge to $u$, we
  conclude that $d_G(x,u_1) \leq \eps^6 C_\e 2^{i_1}$, so that $d_G(v,x)
  \leq R + \eps^6C_\e 2^{i_1} \leq (2+\eps^6)C_\e 2^{i_1}$.
  Suppose that $t > 7\log\e^{-1} + 3$. Then an edge in $F_{i_t}$
  is donated from $x$ to $u_t$, and we have that $d_G(x,u_t) \geq C_\e 2^{i_4 -
  1}$. On the other hand, since $u_t \in B(v,R)$, by triangle
  inequality, $d_G(x,u_t) \leq d_G(x,v)+ d_G(v,u_t) \leq (3+\eps^6)C_\e
  2^{i_1}$. Since $i_4 \geq i_1 + 3$, this gives us the desired
  contradiction. Thus $t \leq O(\log \e^{-1})$. Since there are at most
  $\e^{-O(\dim_G)}$ edges donated to $B(v,R)$ from each of these levels, the
  claim follows.
\end{proof}

Using Lemma~\ref{lem:bd-spanner-longs} along with the Structure
Theorem~\ref{thm:struct} implies that the dimension of $\conv(G')$ is
bounded by $O(\dim_G \log \e^{-1})$, which proves Theorem~\ref{thm:main1}.

\section{Convex Completions for Trees}
\label{sec:conv-compl-trees}

The construction of the previous section showed that given any graph
$G$, we could construct a new graph $G'$ such that distances in $G$ and
$G'$ are within $(1+\e)$ of each other, and $\conv(G')$ has low doubling
dimension. However, since the construction starts with the shortest-path
metric $d_G$ and completely ignores the topological structure of $G$
itself, it is not suited to proving Theorem~\ref{thm:main2} which seeks
to start with a tree and end with another tree. In this section, we show
a different approach that allows us to monitor the graph structure more
closely.

\subsection{The Construction for Trees}
\label{sec:construction-trees}

We give a procedure that takes a general graph $G$ and outputs a graph
$G'$ (since the construction itself does not depend on $G$ being a
tree); we then show some properties that hold when $G$ is a tree. The
procedure takes a graph $G = (V,E)$, and constructs a new graph $G' =
(V', E')$ with $V \sse V'$ (by way of an intermediate graph
$\widehat{G}$) as follows. Define an \emph{exponential tail} with $k$
edges as a path $P = \langle v_0, v_1, v_2, \ldots, v_k \rangle$, where
the length of the edge $\{v_{i-1}, v_{i}\}$ is $2^i$. Without loss of
generality, the smallest edge length in $G$ is at least $2^{\leps}$,
where $\leps = 6+\lceil\log(\frac{1}{\eps})\rceil$.

We construct the graph $G'$ in the following way:
\begin{itemize}
\item As in Section~\ref{sec:spanner-construction}, we consider a
  net-tree $(\T, \phi)$ for the graph $G$. If $N_i$ is the set of nodes
  in $T$ at height $i$, then for $u \in V$ define $i^*(u)$ to be the
  largest $i$ such that $u \in \phi(N_i)$. Attach to each $u \in V$ an
  exponential tail with $i^*(u)$ edges; refer to the $j^{th}$ vertex on
  this path as $u_{[j]}$, with $u_{[0]} = u$. Let $\widehat{G}$ be this
  intermediate graph consisting of $G$ along with the tails.

\item Consider an edge $e = \{u,v\} \in E(G)$, and suppose its length
  lies in the interval $(C_\e 2^{i-1}, C_\e 2^{i}]$. Some leaf of $T$
  must be mapped by $\phi$ to $u \in V$: let the level-$(i)$ ancestor of
  that node be mapped by $\phi$ to $\widehat{u}$; similarly, define
  $\widehat{v}$ be defined for $v$. We now make an edge $\{
  \widehat{u}_{[i]}, \widehat{v}_{[i]} \}$ of length $\ell_e$ in the
  graph $G'$.
\end{itemize}
Note that if we start off with a tree $T$, the above procedure adds
exponential tails to $T$ to get the intermediate graph $\widehat{T}$,
and then ``moves the edges up the tails'' to get the final graph $T'$.

\begin{proposition}[Distance Preservation]
  \label{fct:stretch-tree}
  Let $\e < 1/4$. If the input graph is a tree $T = (V,E)$, then the
  above procedure results in a connected tree $T' = (V',E')$ such that
  for any $x,y \in V$,
  \[ (1+\e)^{-1}d_T(x,y) \leq d_{T'}(x,y) \leq (1+\e)d_T(x,y). \]
\end{proposition}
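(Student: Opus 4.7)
The plan has three pieces: establish that $T'$ is a tree, prove the upper bound $d_{T'}(x,y)\le (1+\eps)d_T(x,y)$, and prove the lower bound $d_{T'}(x,y)\ge d_T(x,y)/(1+\eps)$. The first two are proved together by induction on the dyadic \emph{level} of edges (edge $e$ has level $i$ iff $\ell_e\in(C_\eps 2^{i-1},C_\eps 2^i]$); the lower bound is a separate projection argument.

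\emph{Tree structure and upper bound.} The counting is clean: $|V(T')|=|V|+\sum_{u\in V} i^*(u)$ and $|E(T')|=\sum_{u\in V} i^*(u)+|E(T)|$, which together with $|E(T)|=|V|-1$ gives $|E(T')|=|V(T')|-1$. Hence $T'$ is a tree once it is connected, which in turn follows if $u$ and $v$ are connected in $T'$ for every $T$-edge $\{u,v\}$. I induct on the level $i$, proving simultaneously that $d_{T'}(u,v)\le (1+\eps)\ell_e$ for every level-$i$ edge $e=\{u,v\}$. The candidate $T'$-path is
\[
u\leadsto \widehat{u}\to \widehat{u}_{[1]}\to\cdots\to \widehat{u}_{[i]}\to \widehat{v}_{[i]}\to\cdots\to \widehat{v}\leadsto v,
\]
where the two ``$\leadsto$'' side-legs are supplied by the induction. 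The net-tree property gives $d_T(u,\widehat{u})\le 2^{i+1}$, so every edge on the $T$-path from $u$ to $\widehat{u}$ has length $\le 2^{i+1} < C_\eps 2^{i-1}$ (using $C_\eps\ge 4$), hence lies at level $\le i-1$; by IH, each such edge has a $T'$-connection of length at most $(1+\eps)$ times its $T$-length. Telescoping, $d_{T'}(u,\widehat{u})\le (1+\eps)\cdot 2^{i+1}$; with two tail climbs of length $2^{i+1}-2$ and the new edge of length $\ell_e$, this totals $d_{T'}(u,v)\le \ell_e + (8+4\eps)\cdot 2^i$. Since $\ell_e>(2+16/\eps)2^i$, the additive excess is at most $\tfrac{8+4\eps}{2+16/\eps}\cdot \ell_e\le \eps\ell_e$ whenever $\eps\le 4$, closing the induction. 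Summing along the unique $T$-path from $x$ to $y$ gives the upper bound.

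\emph{Lower bound.} Define $\pi:V'\to V$ by $u_{[j]}\mapsto u$. Every $T'$-edge is either a tail edge (on which $\pi$ is constant) or a new edge $\{\widehat{u}_{[i]},\widehat{v}_{[i]}\}$ whose endpoints $\pi$-project to $\widehat{u},\widehat{v}$ satisfying $d_T(\widehat{u},\widehat{v})\le d_T(\widehat{u},u)+\ell_e+d_T(v,\widehat{v})\le \ell_e+2^{i+2}\le (1+\eps/4)\ell_e$, using $2^{i+2}/\ell_e\le 8/C_\eps\le \eps/4$. Let $P$ be the unique $T'$-path from $x$ to $y$; then $\pi(P)$ is a walk from $x$ to $y$ in $(V,d_T)$, and by triangle inequality in $d_T$,
\[
d_T(x,y)\le \sum_{\text{new edges in }P} d_T(\widehat{u_k},\widehat{v_k})\le (1+\eps/4)\sum_{\text{new edges in }P}\ell_{e_k}\le (1+\eps/4)\cdot\mathrm{length}(P).
\]
Hence $d_{T'}(x,y)=\mathrm{length}(P)\ge d_T(x,y)/(1+\eps/4)\ge d_T(x,y)/(1+\eps)$.

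The main obstacle is arranging the constants: the choice $C_\eps=4+32/\eps$ is tuned so that both (i) $C_\eps\ge 4$, which forces the $T$-path from $u$ to $\widehat{u}$ to consist of strictly lower-level edges (essential for the induction to close), and (ii) $C_\eps\ge 32/\eps$, which makes the additive $O(2^i)$ tail-detour cost fit inside a $(1+\eps)$-factor of $\ell_e$. Once both inequalities are verified, the rest of the argument is routine bookkeeping.
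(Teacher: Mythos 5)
Your proof is correct and follows essentially the same route as the paper: the same level-by-level induction with the same candidate $T'$-path for the upper bound and tree connectivity, and an edge-by-edge contraction bound for the lower bound. The one cosmetic difference is that for the lower bound you project directly to $V$ via $\pi$ and apply the triangle inequality in $d_T$, whereas the paper routes through the intermediate graph $\widehat{T}$ (using $d_T=d_{\widehat{T}}|_V$ and bounding the contraction from $\widehat{T}$ to $T'$ edge by edge); the two are interchangeable, and your version saves a small amount of slack by not paying for the tail lengths.
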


\begin{proof}
  Let us consider performing the above-mentioned transformation for
  edges in increasing order of edge-length. Given $j \in \Z_{\geq 0}$,
  let $T_j$ be the forest formed by deleting all edges of length more
  than $C_\e 2^j$ from $T$; also, let $T'_j$ be the forest formed by
  deleting the corresponding edges in $T'$. We will prove by induction on $j$
  that for all $x,y$ that lie in some connected component in $T_j$,
  their distance in $T_j'$ will satisfy the desired stretch bound. The
  base case is trivial, since all components of $T_0$ have single nodes
  in them.

  To prove the claim for $j$, we inductively assume it for $j-1$. Now
  consider taking some edge $e = \{u,v\}$ of length $\ell_e \in (C_\e
  2^{j-1}, C_\e 2^{j}]$. In this case we find some nodes $\widehat{u}$
  and $\widehat{v}$, and add an edge of length $\ell_e$ between
  $\widehat{u}_{[j]}$ and $\widehat{v}_{[j]}$.  By the properties of the
  net-tree, the distance $d_T(u, \widehat{u}) \leq 2^{j+1} - 2$. Since
  $T_j$ already contains all edges of length at most $C_\e 2^{j-1}$, and
  $C_\e \geq 4$, the net point $\widehat{u}$ lies in the same component
  as $u$ in $T_j$. By the induction hypothesis, $d_{T'}(u,\widehat{u}) \leq
  (1+\e) 2^{j+1}$; note that this implicitly proves that $u$ and
  $\widehat{u}$ are in the same component in $T'_j$. A similar claim holds for
  $d_T(v, \widehat{v})$. Hence the distance in $T'_{j+1}$ between $u$
  and $v$ is at most
  \begin{align*}
    & d_{T'_j}(u, \widehat{u}) + d_{T'_j}(\widehat{u},
    \widehat{u}_{[j]}) + \ell_e + d_{T'_j}(\widehat{v}_{[j]},
    \widehat{v}) +
    d_{T'_j}(\widehat{v}, v) \\
    & = 2 \times (1+\e) 2^{j+1} + 2 \times 2^{j+1} + \ell_e \\
    & \ts \leq \ell_e\; (\frac{8(1+\e) + 8}{C_\e} + 1) \leq
    (1+\e)\ell_e,
  \end{align*}
  where we used the fact that $C_\e = (4 + \frac{32}{\e})$ and $\e <
  1/4$. Since each of the edges of $T$ are not stretched by more than
  $(1+\e)$, this implies that the stretch for all pairs is bounded by
  the same value.

  We also need to show that the distances are not shrunk too much in
  $T'$: to show this, we go via $\widehat{T}$. (Recall that
  $\widehat{T}$ was the original tree $T$ along with the exponential
  tails.) First note that for any $u,v \in V$,
  $d_{T}(u,v)=d_{\widehat{T}}(u,v)$. We show that distance do not shrink
  in going from $\widehat{T}$ to $T'$.  It suffices to show this for the
  edges of $T'$. For an edge $e'=(\widehat{u}_{[j]},\widehat{v}_{[j]})$
  that has length $\ell_e \geq C_\e2^{j-1}$, we note that their distance
  in $\widehat{T}$
  \begin{gather}
d_{\widehat{T}}(\widehat{u}_{[j]},\widehat{v}_{[j]}) \leq
d_{\widehat{T}}(\widehat{u}_{[j]}, \widehat{u}) +
d_{\widehat{T}}(\widehat{u},u) + \ell_e +
d_{\widehat{T}}(v,\widehat{v}) +
    d_{\widehat{T}}(\widehat{v}, \widehat{v}_{[j]}) \leq 4(2^{j+1}-2) + \ell_e
  \end{gather}
Since $C_\e > 32/\e$, this is at most $(1+\e)\ell_e$. Thus the
contraction going from $\widehat{T}$ to $T'$ is at most $(1+\e)$.

Finally, we note that we have shown that $T'$ is connected, and the
number of edges in $T'$ is equal to the number of edges in
$\widehat{T}$, which is a tree. Thus $T'$ is a tree as well.
\end{proof}

\subsection{Bounding the Dimension of the Convex Closure: The Tree Case}

Finally, to show that the doubling dimension of $\conv(T')$ is
small, we will again invoke Theorem~\ref{thm:struct}. However, since
we have added additional vertices in going from $T$ to $T'$, we
first show that $\dim(T')$ is $O(\dim(T))$. Since we have already
shown that distances are preserved in going from $\widehat{T}$ to
$T'$, it suffices to bound the doubling dimension of $\widehat{T}$.

\begin{lemma}
The doubling dimension of $\widehat{T}$ is at most $O(\dim(T))$.
\end{lemma}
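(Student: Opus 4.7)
The plan is to bound $\dim(\widehat{T})$ directly from the definition: show that every ball $B_{\widehat{T}}(p, 2r)$ is covered by $2^{O(\dim(T))}$ balls of radius $r$. The structural picture is that $\widehat{T}$ is the tree $T$ with an exponential tail hanging off each $u \in V$ of total length $2^{i^\ast(u)+1}-2$, matching the scale at which $u$ participates in the net hierarchy. For $q \in V(\widehat{T})$ introduce the projection $\pi(q) \in V$ and the tail-height $h(q) = d_{\widehat{T}}(q, \pi(q))$; for two points $q_1, q_2$ on tails with distinct bases, $d_{\widehat{T}}(q_1, q_2) = h(q_1) + d_T(\pi(q_1), \pi(q_2)) + h(q_2)$, since any path must descend both tails.

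First I would reduce to the case where $p \in V$: if $p = u_{[j]}$, then $B_{\widehat{T}}(p, 2r)$ splits into (i) points on $u$'s own tail, which form a $1$-dimensional exponential path and are coverable by $O(1)$ radius-$r$ balls, and (ii) points routing through $u$, which all lie inside $B_{\widehat{T}}(u, 2r - h(p)) \subseteq B_{\widehat{T}}(u, 2r)$. So it suffices to cover $B_{\widehat{T}}(u^\ast, 2r)$ for $u^\ast \in V$. Apply the doubling property of $T$ twice to obtain an $(r/2)$-cover $\{y_1, \dots, y_m\} \subseteq V$ of $V \cap B_T(u^\ast, 2r)$ with $m \leq 2^{2\dim(T)}$. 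Then the balls $B_{\widehat{T}}(y_i, r)$ cover both every tree vertex in $V \cap B_T(u^\ast, 2r)$ and every \emph{shallow} tail point $v_{[k]}$ with $2^{k+1} - 2 \leq r/2$: such a $v$ satisfies $d_T(u^\ast, v) \leq 2r$, so some $y_i$ has $d_T(y_i, v) \leq r/2$, and then $d_{\widehat{T}}(y_i, v_{[k]}) = d_T(y_i, v) + (2^{k+1}-2) \leq r$.

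The main obstacle is the \emph{deep} tail points, meaning $v_{[k]}$ with $2^{k+1}-2 > r/2$: two such points on tails of distinct bases are far apart (one must descend each tail to its base before crossing in $T$), so they cannot in general be covered by the tree-level cover $\{y_i\}$. Here I would exploit the fact that a deep tail point of $v$ forces $i^\ast(v) \geq K := \lceil \log_2(r/4) \rceil$, so $v$ belongs to $\phi(N_K)$. Since $\phi(N_K)$ is a $2^K$-net with $2^K \geq r/4$, Fact~\ref{fct:net-size} bounds the number of such $v$ inside $B_T(u^\ast, 2r)$ by $2^{O(\dim(T))}$. For each such $v$, the deep tail points of $v$ lying in the ball sit on the $1$-dimensional tail $T_v$ within tree-distance at most $2r$ of $v$, and hence are covered by $O(1)$ additional $r$-balls centered at tail vertices of $v$.

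Summing these contributions, the total number of radius-$r$ balls used is $2^{O(\dim(T))}$, yielding the doubling-constant bound and hence $\dim(\widehat{T}) = O(\dim(T))$.
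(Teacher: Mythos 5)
Your proof is correct and follows essentially the same strategy as the paper's: cover the tree part of the ball by doubling of $T$, observe that shallow tails are absorbed into that cover after inflating the radius, note that only vertices in a high-level net $\phi(N_{j-2})$ (or $\phi(N_K)$ in your notation) can carry long tails so there are only $2^{O(\dim(T))}$ of them, and finish each such long tail with $O(1)$ extra centers. The one place you are slightly more explicit than the paper is the opening reduction to $p \in V$ by splitting a tail-centered ball into the tail-interval piece and the piece routing through the tail's base; the paper centers its cover around the base vertex $u$ for a general $u_{[i]}$ and leaves the corresponding verification to the reader, so your version is, if anything, a bit more careful on that point.
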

\begin{proof}
Let $u_{[i]} \in V(\widehat T )$ and $R \geq 0$ with $R \in
(2^{j-1},2^j]$. We wish to show that $B(u_{[i]},2R)$ can be covered
by a small number of balls of radius $R$. From the definition of
doubling dimension, it follows that there is a set $Y$ with $|Y|
\leq 2^{2\dim(T)}$ such that $B_{T}(u,2R) \subseteq \cup_{y \in Y}
B_{T}(u, R/2)$. Note that for any $v \not \in \phi(N_{j-2})$, the
tail attached to $v$ has length at most $R/2$. Let $Z = B(u, 2R)
\cap \phi(N_{j-2})$; clearly $|Z| \leq 2^{O(\dim(T))}$. Finally, let
$Z' = \{v_{[j-1]} : v \in Z\}$ and $Z''=\{v_{[j]} : v \in Z\}$. It
is easy to verify that $B(u_{[i]},2R) \subseteq \cup_{y \in Y \cup Z
\cup Z'\cup Z''} B(y,R)$. The claim follows.
\end{proof}

Finally, we show the following bound on the number of long edges in
$T'$.
\begin{lemma}[Few Long Edges]
  \label{lem:long-edges-tree}
  For any vertex $v \in T'$ and every radius $R$, the number of long
  edges in $T'$ is bounded by $2^{O(\dim)}\log \e^{-1}$.
\end{lemma}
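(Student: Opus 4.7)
The plan is to apply Theorem~\ref{thm:struct} to $T'$: since $\dim(T')=O(k)$ with $k:=\dim(T)$ by the previous lemma, establishing $|L_v(R)|\leq 2^{O(k)}\log\e^{-1}$ suffices to yield $\dim(\conv(T'))\leq O(k+\log\log\e^{-1})$, as needed by Theorem~\ref{thm:main2}. Recall that $T'$ has two edge types: tail edges $\{u_{[j-1]}, u_{[j]}\}$ of length $2^j$, and lifted edges $\{\widehat u_{[i]}, \widehat v_{[i]}\}$ of length $\ell_e\in(C_\e 2^{i-1}, C_\e 2^i]$ corresponding to an edge of $T$ at level $i$. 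Fix $v\in V(T')$, $R>0$, and let $\ell$ satisfy $R\in(C_\e 2^{\ell-1}, C_\e 2^\ell]$.

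The first step is to restrict the analysis to $O(\log\e^{-1})$ relevant levels. Every vertex $x=\widehat u_{[i]}\in V(T')$ has each incident edge of length at least $2^i$: the tail edges at $x$ have lengths $2^i$ and $2^{i+1}$, and any lifted edge incident to $x$ is at level $i$ and hence has length $\geq C_\e 2^{i-1}\geq 2^{i+1}$ (using $C_\e\geq 4$ for $\e<1/4$). Consequently, for $v\neq x$ the last edge of the shortest $v$-to-$x$ path already gives $d_{T'}(v,x)\geq 2^i$, so $x\in B_{T'}(v,R)$ forces $2^i\leq R\leq C_\e 2^\ell$, i.e., $i\leq \ell+\log_2 C_\e=\ell+O(\log\e^{-1})$. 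Combined with the lower bound $i\geq\ell$ needed for a lifted edge to be long, only $O(\log\e^{-1})$ levels can contribute; the same kind of argument restricts long tail edges to $O(\log\e^{-1})$ values of $j$.

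The second step is to bound the long lifted edges at each relevant level $i$ by $2^{O(k)}$. Every such edge corresponds to an original edge $\{u,w\}\in E(T)$ at level $i$ with, say, $\widehat u_{[i]}\in B_{T'}(v,R)$; both endpoints $u,w$ then lie in a $T$-ball $B$ of radius $O(C_\e 2^i)$ centered at an original vertex close to $v$. To count level-$i$ edges of $T$ with both endpoints in $B$, I plan to take a $(C_\e 2^{i-1}/2)$-net $N$ of $B$ of size $|N|=2^{O(k)}$ (by doubling in $T$) and look at the Voronoi partition of $V(T)$ induced by $N$, with consistent tie-breaking so that each cell is a subtree. Each cell has diameter at most $C_\e 2^{i-1}$, strictly less than any level-$i$ edge length, so the two endpoints of any level-$i$ edge must lie in distinct cells; tree acyclicity then forces at most one level-$i$ edge to bridge any pair of cells, yielding at most $\binom{|N|}{2}=2^{O(k)}$ level-$i$ edges in $B$. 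Long tail edges admit an analogous but easier bound using the packing of $\{u_{[j]}:u\in\phi(N_j)\}$ in $T'$. Summing $2^{O(k)}$ per level across $O(\log\e^{-1})$ levels yields the claimed bound $|L_v(R)|\leq 2^{O(k)}\log\e^{-1}$.

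The main obstacle is the sharp $2^{O(k)}$ per-level count, as opposed to the weaker $\e^{-O(k)}$ one would obtain by directly packing net-ancestors in a $T'$-ball of radius $R$ (where $R$ exceeds the natural packing scale $2^i$ by a factor $O(\e^{-1})$). The Voronoi-plus-tree-acyclicity argument above is precisely where the tree hypothesis on $T$ materially helps, and it is what distinguishes the tree case from the general-graph case of Lemma~\ref{lem:bd-spanner-longs}, where only $\e^{-O(k)}$ per level is available.
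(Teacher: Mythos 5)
Your proposal is correct, but it takes a genuinely different route from the paper. The paper argues directly in $T'$: for each of the $O(\log\e^{-1})$ relevant length scales it collects the \emph{far} endpoints $W=\{w_i\}$ of the long edges $(u_i,w_i)$ with $d_{T'}(v,u_i)\leq R$, observes that the $v$-to-$u_i$ shortest paths avoid all long edges (they have length $\leq R$), and then uses the tree structure of $T'$ — the symmetric difference of the $v$-$w$ and $v$-$w'$ paths is the $w$-$w'$ path and contains both long edges — to show $W$ is a near-uniform submetric of $T'$, hence $|W|\leq 2^{O(\dim)}$. You instead pull each long lifted edge at level $i$ back to its preimage edge of $T$, confine both endpoints to a $T$-ball $B$ of radius $O(C_\e 2^i)$, and count level-$i$ edges of $T$ inside $B$ via a Voronoi-plus-contraction argument: cells of a $(C_\e 2^{i-1}/2)$-net are connected subtrees, every level-$i$ edge crosses two distinct cells, and acyclicity of $T$ allows at most one crossing edge per pair of cells, giving $2^{O(k)}$. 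Both approaches get the crucial per-level bound of $2^{O(k)}$ rather than $\e^{-O(k)}$, and both need the same side treatment of tail vertices $v=u_{[m]}$ (when $R<2^m$ the ball is trivial and a degree bound applies; otherwise reduce to a nearby original vertex), which you should not gloss over. One thing you assert without proof but should justify, since it is the load-bearing step: that with consistent tie-breaking, Voronoi cells in a tree (and in a ball of a tree, which is geodesically convex) are connected subtrees. This is true — for $w$ on the unique $v$-to-$c(v)$ path, $d(w,c')\geq d(w,c(v))$ for every center $c'$, with equality only when the $v$-to-$c'$ path also passes through $w$, in which case the tie at $v$ was already resolved in favor of $c(v)$ — but it deserves a sentence. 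A nice feature of your route is that it makes explicit why each long edge is counted once (the $T$-to-$T'$ edge map is a bijection), a point the paper leaves implicit in its $|W|$ bound.
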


\begin{proof}
   First consider some $v \in V$, and $R \geq 0$, and define $\ell \in \Z_{\geq 0}$ such that $R \in (C_\e 2^{\ell-1}, C_\e
  2^{\ell}]$.
  Every long edge incident on $B(v,R)$ must have length at least
  $R$. Further, edges longer than $2 C_\e R$ are incident on a tail
  node further than $R$ from its root, and hence such an edge cannot
  be incident on $B(v,R)$. For each of the length scales
  $(C_\e 2^{\ell+j-1}, C_\e 2^{\ell+j}): 0 \leq j \leq \log C_{\e}$,
  we will bound the number of long edges in that length scale.
  Fix one such scale, and let $L(v,R,j)= \{(u_i,w_i): 1 \leq i \leq |L(v,R,j)|\}$ be the set
  of long edges of length in $(C_\e 2^{\ell+j-1}, C_\e 2^{\ell+j})$,
  such that $d(v,u_i) \leq R$. Since each long edge has length
  more than $R$, there is a path from $v$ to $u_i$ that does not use
  any of the long edges. Consider the set of nodes $W=\{w_i: 1 \leq i \leq |L(v,R,j)|\}$. Clearly,
  for any $w,w'\in W$, $d(w,w')$ is at most $2R + 2 C_\e 2^{\ell+j} \leq 4 C_\e 2^{\ell+j}$. Moreover, since $T$ is
  a tree, the symmetric difference of the $v$-$w$ and $v$-$w'$ paths
  gives the shortest path from $w'$ to $w$. Since the long edges
  incident on $w$ and $w'$ are in this symmetric difference, we
  conclude that $d(w,w') \geq 2 C_\e 2^{\ell+j-1}$. Thus from the
  bound on doubling dimension, we conclude that $|W| \leq
  2^{O(\dim)}$. Adding the contribution of the $O(\log \e^{-1})$
  distance scales, we get the desired bound.

  We now extend the argument to a vertex $v_{[i]}$ on an exponential
  tail hanging off $v$. If $i \geq j$, then $B(v_{[i]},R) = \{v_{[i]}\}$.
  All edges incident on $v$ have, up to a factor of two, the same
  length, and thus their endpoints form a near uniform submetric.
  Thus we can bound the degree of $v_{[i]}$ by $2^{O(\dim)}$
  and the claim follows. On the other hand, when $i < j$, $B(v_{[i]},R) \subseteq B(v,2R)$ and an
  argument analogous to the one for the case $v \in V$ above
  suffices.
\end{proof}

Theorem~\ref{thm:main2} follows.

\section{Lower Bounds}

In this section, we show that the tradeoff between distortion and
dimension blowup is asymptotically optimal. Consider the graph
$K_{1,n}$ with $v_0$ as the center node and $\{v_1,\ldots,v_n\}$ as
the set of leaves. Set the length of the edge $\{v_0,v_i\}$ to $2^i$
and let $d$ be the resulting metric on the vertices $V$ of
$K_{1,n}$. It is easy to check that this metric has constant
doubling dimension. We next show that the doubling dimension of any
geodesic metric $(X,d')$ containing a $(1+\e)$-distortion copy of
$(V,d)$ is $\Omega(\log\log\e^{-1})$.
\begin{lemma}Let $(X,d')$ be any geodesic metric such that $V
\subseteq X$ and $d(v_i,v_j) \leq d'(v_i,v_j) \leq
(1+\eps)d(v_i,v_j)$ for all $v_i,v_j \in V$. Then $\dim(X,d')$ is
$\Omega(\log\log\e^{-1})$.
\end{lemma}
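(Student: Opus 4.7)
The plan is to exploit the geodesic property to build a near-uniform submetric of size $\approx \log(1/\eps)$ inside a single ball, and then conclude the dimension lower bound via the standard packing-versus-covering inequality. The input metric $d$ on $K_{1,n}$ makes the distances $d(v_i,v_j)=2^i+2^j$ between leaves much larger than $d(v_0,v_i)=2^i$ to the center, so any low-distortion embedding is forced to send the ``paths'' out from $v_0$ into nearly straight, nearly orthogonal geodesics. I will quantify this.

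Assume $n$ is large enough (say $n\ge \log_2(1/\eps)+2$). Fix an index $i_0$ and set $r:=2^{i_0}$ and $k:=\lfloor\log_2(3/(4\eps))\rfloor$, with $i_0+k\le n$. Since $(X,d')$ is geodesic and $d'(v_0,v_j)\ge 2^j\ge r$ for $j\in\{i_0,i_0+1,\dots,i_0+k\}$, I can pick a point $p_j$ on the shortest $v_0$-to-$v_j$ path with $d'(v_0,p_j)=r$. Because $p_j$ lies on a geodesic, $d'(p_j,v_j)=d'(v_0,v_j)-r\le(1+\eps)2^j-2^{i_0}$. Then by the triangle inequality and $d'(v_j,v_{j'})\ge 2^j+2^{j'}$,
\[
d'(p_j,p_{j'})\;\ge\; d'(v_j,v_{j'})-d'(p_j,v_j)-d'(p_{j'},v_{j'})\;\ge\; 2\cdot 2^{i_0}-\eps(2^j+2^{j'}).
\]
For $j,j'\le i_0+k$ the right-hand side exceeds $r/2=2^{i_0-1}$ by my choice of $k$, so the $m:=k+1=\Theta(\log \eps^{-1})$ points $\{p_{i_0},\ldots,p_{i_0+k}\}$ lie in $B(v_0,2r)$ and are pairwise at distance strictly greater than $r/2$.

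It remains to convert this packing into a lower bound on the doubling dimension. Applying Definition 2.3 three times, $B(v_0,2r)$ can be covered by $\lambda^3$ open balls of radius $r/4$; but two $p_j$'s cannot lie in the same such ball, since that would force $d'(p_j,p_{j'})<r/2$. Hence $\lambda^3\ge m$, so $\dim(X,d')=\log_2\lambda\ge \tfrac{1}{3}\log_2(k+1)=\Omega(\log\log\eps^{-1})$.

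I do not expect any serious obstacle here; the only point of slight care is making the interpoint inequality strictly greater than $r/2$ (easily done by adjusting the constant in $k$) so that the final covering-by-$r/4$-balls argument is clean. Everything else is an elementary use of the geodesic hypothesis and the triangle inequality.
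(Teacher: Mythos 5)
Your proof is correct and is essentially the paper's own argument, reparameterized: where the paper places the points $v_0v_i[1]$ at distance~$1$ from $v_0$ along the geodesics for $i=1,\dots,\log(2\e)^{-1}$, you place points $p_j$ at distance $r=2^{i_0}$ for $j=i_0,\dots,i_0+k$, and both derive identical upper and lower bounds on the pairwise distances via the triangle inequality and the distortion hypothesis before invoking the doubling definition a constant number of times. The only difference is cosmetic (choice of scale and constants), so this is the same approach.
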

\begin{proof}
Denote by $uw[x]$ the point on the shortest $u$-$w$ path in $X$ that
is at distance $x$ from $u$ (if there is more than one shortest
path, pick one arbitrarily). We shall argue that the points
$v_0v_i[1]$ for $i=\{1,\ldots,\log (2\e)^{-1}\}$ form a large
near-uniform submetric in $X$. Indeed $d'(v_0v_i[1],v_0v_j[1]) \leq
d'(v_0v_i[1],v_0)+d'(v_0,v_0v_j[1]) = 2$. On the other hand, by
triangle inequality,
\begin{eqnarray*}
d'(v_0v_i[1],v_0v_j[1]) &\geq& d'(v_i,v_j) - d'(v_0v_i[1],v_i) -
d'(v_0v_j[1],v_j)\\
&=& d'(v_i,v_j) - (d'(v_0,v_i)-1) - (d'(v_0,v_j)-1)\\
&\geq& 2 + d(v_i,v_j) - (1+\e)(d(v_0,v_i)+d(v_0,v_j))\\
&=& 2 - \e(2^i+2^j)\\
\end{eqnarray*}
where we have used the bound on the distortion and the distance
definitions in $d$ in the last two steps. Since $i,j \leq
\log(2\e)^{-1}$, we conclude that $d'(v_0v_i[1],v_0v_j[1]) \geq 1$.
Thus we have $\log (2\e)^{-1}$ points in $X$ that lie within
$B(v_0,2)$ no two of which can be covered by a single ball of radius
$\frac{1}{2}$. Thus the doubling dimension of $X$ is
$\Omega(\log\log\e^{-1})$.
\end{proof}

Theorem~\ref{thm:treelowerbound} follows.

For general metrics, we show a stronger lower bound, under a
stronger constraint on $X$. Let $V=\{0,1\}^p$ with $d(x,y) =
2^{p-lcp(x,y)}$, where $lcp(x,y)$ denotes the length of the longest
common prefix of strings $x$ and $y$. Once again, one can easily
check that $(V,d)$ has constant doubling dimension. We show that any
graph $H=(V,E)$ on $V$ approximating $d$ within distortion $(1+\e)$
must satisfy $\dim(\conv(H)) \in \Omega(\log \e^{-1})$.
\begin{lemma}
Let $H= (V,E)$ be any graph such that the shortest path metric $d'$
satisfies $d(x,y) \leq d'(x,y) \leq (1+\e)d(x,y)$ for all $x,y \in
V$. Then $\dim(\conv(H))$ is $\Omega(\log \e^{-1})$.
\end{lemma}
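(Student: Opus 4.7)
The plan is to apply the converse direction of the Structure Theorem~\ref{thm:struct}: it suffices to exhibit a vertex $u \in V$ and a radius $r > 0$ for which $|L_u(r)| = \Omega(\e^{-1})$, since this will force $\dim(\conv(H)) \geq \Omega(\log \e^{-1})$.

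Take $u = 0^p$, and partition $V$ into $S = \{v : v \text{ starts with } 0\}$ and $T = \{v : v \text{ starts with } 1\}$, so that $d(x,y) = 2^p$ for every $x \in S$ and $y \in T$. For each $v \in T$, choose a shortest $u$-to-$v$ path $P_v$ in $H$, which has length at most $(1+\e)2^p$, and let $(a_v,b_v)$ be the first edge on $P_v$ going from $S$ to $T$. Since $a_v \in S, b_v \in T$ we have $\ell(a_v,b_v) \geq d'(a_v,b_v) \geq d(a_v,b_v) = 2^p$, and decomposing the length of $P_v$ yields
\[
d(u,a_v) + 2^p + d(b_v,v) \leq d'(u,a_v) + \ell(a_v,b_v) + d'(b_v,v) \leq (1+\e)\,2^p,
\]
so $d(u,a_v) + d(b_v,v) \leq \e\,2^p$; in particular, both $d(u,a_v)$ and $d(b_v,v)$ are at most $\e\,2^p$.

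For the counting step, set $\ell = p - \lfloor \log_2 \e^{-1}\rfloor$ and partition $T$ into level-$\ell$ clusters determined by the first $p-\ell$ bits; this yields $2^{p-1-\ell} = \Omega(\e^{-1})$ clusters, each of diameter $2^{\ell} \geq \e\,2^p$. Because $d(b_v,v) \leq \e\,2^p \leq 2^{\ell}$, each $b_v$ lies in the same level-$\ell$ cluster as $v$, so every cluster of $T$ is ``witnessed'' by some edge $(a_v,b_v)$, and distinct clusters produce distinct $b$-endpoints and hence distinct edges. Thus $H$ contains $\Omega(\e^{-1})$ edges of length $\geq 2^p$ whose $S$-endpoints $a$ satisfy $d'(u,a) \leq (1+\e)d(u,a) \leq (1+\e)\e\,2^p < 2\e\,2^p$. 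Setting $r = 3\e\,2^p$, which lies in $(2\e\,2^p,\, 2^p)$ because $\e < 1/4$, every one of these edges has one endpoint inside $B_{d'}(u,r)$ and total length $\geq 2^p > r$; hence $|L_u(r)| = \Omega(\e^{-1})$, and invoking the converse of Theorem~\ref{thm:struct} concludes the argument.

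The main subtlety is the cluster/witness counting: one must verify that distinct $T$-clusters yield distinct witness edges (via their distinct $b$-endpoints) and that a single radius $r$ can be picked to simultaneously satisfy $r > d'(u,a_v)$ for every witness and $r < \ell(a_v,b_v)$ for every witness; this amounts to ensuring a constant-factor gap between $\e\,2^p$ and $2^p$, which is exactly where the hypothesis $\e < 1/4$ enters. We also tacitly pick $p \geq \log_2 \e^{-1} + O(1)$, so that the level-$\ell$ partition of $T$ is non-trivial.
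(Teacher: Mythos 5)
Your proof is correct, but it takes a genuinely different route from the paper's, and it is worth noting the contrast. The paper sets $p = \log(2\e)^{-1}$, small enough that $\e \cdot 2^p < 1$; with this choice, any \emph{missing} edge between $V_0 = \{0x\}$ and $V_1 = \{1x\}$ would force the $H$-shortest path between its endpoints to have length $\geq 2^p + 1 > (1+\e)2^p$, so $H$ must contain the full bipartite graph of $\Theta(\e^{-2})$ edges of length $\approx 2^p$. The midpoints of all these edges then form a near-uniform set in $\conv(H)$ (pairwise distances within a constant factor of $2^p$), which directly yields the $\Omega(\log \e^{-1})$ lower bound on doubling dimension without any reference to long edges. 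Your argument instead works for larger $p$, fixes the single vertex $u = 0^p$, decomposes the shortest $u$-$v$ path for each $v \in T$ to extract a crossing edge $(a_v,b_v)$ with $d'(u,a_v), d'(b_v,v) \leq \e 2^p$, and then uses a cluster-counting argument to produce $\Omega(\e^{-1})$ \emph{distinct} long edges incident on $B(u, 3\e 2^p)$, at which point the converse half of Theorem~\ref{thm:struct} finishes. This is arguably the more modular route since it reuses the Structure Theorem rather than implicitly reproving its converse in a special case, and it only needs a one-sided constraint on $p$; the paper's argument is more self-contained and somewhat shorter. A couple of small cosmetic points: from the path decomposition you in fact get the stronger bound $d'(u,a_v) \leq \e 2^p$ directly (no need to pass through $(1+\e)d(u,a_v)$), and you should note that distinct $b$-endpoints give distinct edges because an edge between $S$ and $T$ is determined by its $T$-endpoint together with its $S$-endpoint, and $S \cap T = \emptyset$ rules out the ``swapped'' coincidence. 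Neither affects correctness.
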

\begin{proof}
For $p=\log(2\e)^{-1}$, we first show that $H$ must have all edges
connecting $V_0= \{0x: x\in \{0,1\}^{p-1}\}$ and $V_1= \{1x: x\in
\{0,1\}^{p-1}\}$. Indeed, suppose that edge $(0x,1y) \not\in H$.
Then the shortest path in $H$ between $0x$ and $1y$ must be of
length at least $2^p+1$. This however violates the distortion
constraint. Now consider the set of points $A=\{e[2^{p-1}]: e =
(0x,1y, x,y \in \{0,1\}^{p-1}\}$. Clearly for any $a,b \in A$,
$d(a,b) \leq 3\cdot 2^{p-1}$ and $d(a,b) \geq 2\cdot 2^{p-1}$. The
claimed bound on the doubling dimension follows.
\end{proof}

Theorem~\ref{thm:graphlowerbound} follows.

\subsection*{Acknowledgments}

We thank James Lee for pointing out that a weaker version of
Theorem~\ref{thm:main1} could be inferred from Semmes' results. We also
thank Robi Krauthgamer and Ravishankar Krishnaswamy for discussions.


\end{document}